    \newtheorem{theorem}{Proposition}[section]
    \newtheorem{remark}{Remark}[section]
    \theoremstyle{definition}
    \newtheorem{definition}{Definition}[section]
\date{}
\begin{document}

\title{One Dimensional Fixed Point Interactions and the Resolvent Algebra - Simple Remarks}
\author{ Antonio Moscato\footnote{antonio.moscato@unitn.it}\\
\small{Dipartimento di Matematica, Universit{\'a} di Trento}\\
\small{Via Sommarive 14, I-38123 Povo (Trento), Italy} }

\maketitle

\begin{abstract}
\noindent In this paper, the resolvent algebra $\mathcal{R} \left( \mathbb{R}^2,\sigma \right)$ stability under dynamics induced by the symbolic Hamiltonians
\begin{itemize}
    \item $H = - \frac{d^2}{dx^2} + \alpha \delta\left( x - x_0 \right)$, $\alpha \in \mathbb{R}\setminus\{0\}, \, x_0 \in \mathbb{R}$,
    \item $H = - \frac{d^2}{dx^2} + \sum_{i = 1}^N \, \alpha_i \delta \left( x - x_i \right)$, $\alpha_i \in \mathbb{R} \setminus \{0\}$, $x_i \in \mathbb{R}: \, x_i \neq x_j$, $i,j \in \{1,\ldots,N\}$,
    \item $H = - \frac{d^2}{dx^2} + \sum_{i = 1}^{\infty} \alpha_i \delta \left( x - x_i \right)$, $\left\{ \alpha_i \right\}_i \in l^{1}\left( \mathbb{N} \right)$, $ x_i \in \mathbb{R}: \, x_i \neq x_j, \, i,j \in \mathbb{N}$
\end{itemize}
\noindent is proved: if $\pi_S$ is the Sch{\"o}dinger representation of $\mathcal{R}\left( \mathbb{R}^2,\sigma \right)$ on $L^2\left(\mathbb{R}\right)$, $\left[e^{itH} \pi_S\left( a \right) e^{-itH} \right] \in \pi_S\left[\mathcal{R}\left( \mathbb{R}^2,\sigma \right)\right]$ holds for all $a \in \mathcal{R}\left( \mathbb{R}^2,\sigma \right)$ and $\left( H - i\lambda\mathds{1} \right)^{-1} \in \pi_S\left[ \mathcal{R}\left( \mathbb{R}^2,\sigma \right)\right]$, for all $\lambda \in \mathbb{R}\setminus\{0\}$. Results from \cite{01_BG}, \cite{07_HS} and \cite{09_HK} are used for the purpose.
\end{abstract}

\justifying

\section{Introduction}
In 2008, Buchholz and Grundling introduced a novel $\text{C}^{\ast}-$algebra, the resolvent algebra (\cite{01_BG}), to model (bosonic) quantum mechanical systems, aiming to overcome some of the drawbacks afflicting the historically famous Weyl algebra, typically employed for the purpose. In details, given a symplectic vector space $\left(X,\sigma\right)$, the dynamics of the Weyl algebra $\text{CCR}\left( X,\sigma \right)$ defined by symplectic transformations of $X$ correspond to the dynamics induced by quadratic Hamiltonians, hence excluding many physically interesting situations; as a matter of fact, the following result holds.\\

\noindent
\textbf{Proposition (\cite{02_FV}):} Let $\left( H_{\lambda} = H_0 + \lambda V, \, \mathcal{D}_{H_{\lambda}} \right)$ be a self-adjoint operator on $L^2\left( \mathbb{R} \right)$, where $H_0$ is the free Hamiltonian. If $V \in L^{1}\left( \mathbb{R} \right) \cap L^{\infty}\left( \mathbb{R} \right)$, $\forall \lambda \in \mathbb{R} \setminus \left\{ 0\right\}, t \in \mathbb{R}$, the automorphism $\alpha_{t}^{\lambda}\left( \cdot \right) = e^{itH_{\lambda}}\left( \cdot \right) e^{-itH_{\lambda}}$ of $\mathfrak{B}\left(L^2\left( \mathbb{R} \right)\right)$ is not an automorphism of $\text{CCR}\left(\mathbb{R}^{2},\sigma\right)$ unless $V$ is null. $\hfill \blacksquare$ \\

\noindent Moreover, in regular representations, natural observables as bounded functions of the Hamiltonian do not belong to $\text{CCR}\left( X,\sigma \right)$. Resolvent algebras $\mathcal{R}\left(X,\sigma\right)$, on the other hand, have proved to be not affected by similar drawbacks in different interesting situations (\cite{03_B}, \cite{04_B}); a crucial role for this to happen is its non-simple nature (\cite{01_BG}, \cite{05_B}).\\

\noindent This paper takes a step forward in this direction: it shows that, for the case of a single, non-relativistic spinless particle, the resolvent algebra $\mathcal{R}\left( \mathbb{R}^2,\sigma \right)$, on the one hand, can accommodate dynamics induced by fixed located Dirac delta-like potentials, on the other hand, the corresponding Hamiltonians are affiliated to it, respectively meaning that, if $H$ is the whatever point-like Hamiltonian of the case, $e^{itH} \pi_S\left(a\right) e^{-itH} \in \pi_S\left[ \mathcal{R}\left( \mathbb{R}^2,\sigma \right) \right]$, for all $a \in \mathcal{R}\left( \mathbb{R}^2,\sigma\right)$ and $\left( H - i\lambda\mathds{1} \right)^{-1} \in \pi_S\left[ \mathcal{R}\left( \mathbb{R}^2,\sigma \right) \right]$, for all $\lambda \in \mathbb{R}\setminus\{0\}$, where $\pi_S$ is the Schr{\"o}dinger representation of $\mathcal{R}\left( \mathbb{R}^2,\sigma \right)$ on $L^2\left( \mathbb{R} \right)$. \\

\noindent The reason to focus on delta interactions lies in their ubiquity in applications: they represent good candidates to approximate unknown, short-scaled interactions. An extensive mathematical literature has been dedicated to such singular potential: \cite{06_AGH-KH}, for example, provides a very well-known illustrative reference.\\

\noindent The paper is articulated as follows: section 2 recaps the necessary structural results about resolvent algebras, while section 3 first briefly recalls the content of \cite{07_HS}, then proves what above anticipated, with $H$ alternatively given by the symbols

\begin{itemize}
    \item $H = - \frac{d^2}{dx^2} + \alpha \delta\left( x - x_0 \right)$, $\alpha \in \mathbb{R}\setminus\{0\}, \, x_0 \in \mathbb{R}$,
    \item $H = - \frac{d^2}{dx^2} + \sum_{i = 1}^N \, \alpha_i \delta \left( x - x_i \right)$, $\alpha_i \in \mathbb{R} \setminus \{0\}$, $x_i \in \mathbb{R}: \, x_i \neq x_j$, $i,j \in \{1,\ldots,N\}$,
    \item $H = - \frac{d^2}{dx^2} + \sum_{i = 1}^{\infty} \alpha_i \delta \left( x - x_i \right)$, $\left\{ \alpha_i \right\}_i \in l^{1}\left( \mathbb{N} \right)$, $ x_i \in \mathbb{R}: \, x_i \neq x_j, \, i,j \in \mathbb{N}$.
\end{itemize}

\section{The Resolvent Algebra}
\justifying

\begin{definition}\label{Def2.1}
    \justifying
    Let $\left( X, \sigma \right)$ be a symplectic space and $\mathfrak{R} \doteq \left\{ R\left(\lambda, f \right) \mid \lambda \in \mathbb{R} \setminus \{0\}, \, f \in X \right\}$ a set of symbols. Let $L$ be the following list of relations.
    \begin{enumerate}
        \item $R\left( \lambda, 0 \right) = - \frac{i}{\lambda} \mathds{1}$
        \item $R\left( \lambda, f \right)^\ast = R\left( - \lambda, f \right)$
        \item $\nu R\left( \nu \lambda, \nu f \right) = R\left( \lambda, f \right)$
        \item $R\left( \lambda, f \right) - R\left( \mu, f \right) = i \left( \mu - \lambda\right) R\left( \lambda, f \right)R\left( \mu, f \right)$
        \item $\left[ R\left( \lambda, f \right), R\left( \mu, g \right)\right] = i \sigma\left( f,g \right) R\left( \lambda, f \right) R\left( \mu, g \right)^2 R\left( \lambda, f \right)$
        \item $R\left( \lambda, f \right) R\left( \mu, g \right) = R\left( \lambda + \mu, f+g \right) \left[ R\left( \lambda, f \right) + R\left( \mu, g \right) +  i\sigma\left( f, g \right) R\left( \lambda, f \right)^2 R\left( \mu, g \right) \right]$
    \end{enumerate}
    with $\lambda, \mu, \nu \in \mathbb{R} \setminus \{0\}$, $f,g \in X$ and $\lambda + \mu \neq 0$ at $6.$. Denoted by $\tilde{\mathcal{R}}_{0}$ the unital $\ast-$algebra freely generated by $\mathfrak{R}$ and the two-sided ideal $\mathcal{I}_{L}$ generated by the given relations, the \textbf{unital pre-resolvent $\ast-$algebra} $\mathcal{R}_0 \doteq \tilde{\mathcal{R}}_{0} / \mathcal{I}_L$ is defined. $\hfill \square$
\end{definition}

\begin{remark}
    \justifying Turning $\mathcal{R}_0$ into a $C^{\ast}-$algebra amounts in defining a $C^{\ast}-$norm on it. The following result is preparatory. $\hfill \square$
\end{remark}

\begin{theorem}\label{Prop. 2.1}
    \justifying 
    Let $\left( X, \sigma \right)$ be a symplectic space and let $\mathcal{R}_0$ be as in definition \ref{Def2.1}.
    \begin{enumerate}
        \item If $\mathcal{H}$ is a Hilbert space and $\pi_0: \, \mathcal{R}_0 \longrightarrow \mathcal{B}\left( \mathcal{H} \right)$ is a (bounded) $\ast-$representation of $\mathcal{R}_0$, then $\norm{\pi_0 \left[R\left(\lambda, f \right)\right]} \leq \abs{\lambda}^{-1}$. In other words, for all $a \in \mathcal{R}_0$, there exists $c_a \geq 0$, only depending on $a$, such that $\norm{\pi\left(a\right)} \leq c_a$, for all the bounded Hilbert space representations $\pi$ of $\mathcal{R}_0$.
        \item Let $\omega$ be a positive functional of $\mathcal{R}_0$, i.e. $\omega: \mathcal{R}_0 \longrightarrow \mathbb{C}$ such that $\omega\left(a^\ast a\right) \geq 0$ for all $a \in \mathcal{R}_0$. Then, the GNS-construction yields a cyclic $\ast-$representation of $\mathcal{R}_0$, denoted by $\pi_{\omega}$, consisting of bounded Hilbert space operators.
    \end{enumerate}
    $\hfill \blacksquare$
\end{theorem}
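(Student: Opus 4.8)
The plan is to extract from the defining relations a single algebraic identity, valid already inside $\mathcal{R}_0$, and then feed it into the two standard machines: $C^\ast$-norm estimates for item 1 and the GNS construction for item 2. Fix $\lambda \in \mathbb{R}\setminus\{0\}$ and $f \in X$, and abbreviate $r = R\left(\lambda, f\right)$. Reading relation 4 symmetrically in its two arguments (swap $\lambda \leftrightarrow \mu$ and compare) shows that $R\left(\lambda,f\right)$ and $R\left(\mu,f\right)$ commute for all admissible $\mu$; specialising to $\mu = -\lambda$ and using relation 2 to identify $R\left(-\lambda,f\right) = r^\ast$ yields both $r r^\ast = r^\ast r$ and the quadratic relation $r - r^\ast = -2i\lambda\, r^\ast r$. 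I would then observe that these two facts collapse into the exact factorisation
\[ \left( r + \tfrac{i}{2\lambda}\mathds{1}\right)^\ast \left( r + \tfrac{i}{2\lambda}\mathds{1}\right) = \tfrac{1}{4\lambda^2}\,\mathds{1}, \]
an identity between elements of $\mathcal{R}_0$ obtained from relations 1, 2 and 4 alone. Geometrically this pins the spectrum of the normal element $r$ to the circle of radius $\left(2\abs{\lambda}\right)^{-1}$ through the origin, which is the conceptual source of every bound below.

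For item 1 I would apply a bounded $\ast$-representation $\pi_0$ to this identity. Writing $s = \pi_0\left(r\right) + \tfrac{i}{2\lambda}\mathds{1}$, it becomes $s^\ast s = \left(4\lambda^2\right)^{-1}\mathds{1}$, so the $C^\ast$-identity gives $\norm{s}^2 = \norm{s^\ast s} = \left(4\lambda^2\right)^{-1}$, hence $\norm{s} = \left(2\abs{\lambda}\right)^{-1}$. The triangle inequality then yields
\[ \norm{\pi_0\left(R\left(\lambda,f\right)\right)} \leq \norm{s} + \tfrac{1}{2\abs{\lambda}} = \abs{\lambda}^{-1}, \]
uniformly in $\pi_0$. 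The second assertion is purely formal: every $a \in \mathcal{R}_0$ is a finite $\mathbb{C}$-linear combination of finite products of generators (adjoints being again generators by relation 2), so subadditivity and submultiplicativity of the operator norm give $\norm{\pi\left(a\right)} \leq c_a$, where $c_a$ arises by substituting $\abs{\lambda_j}^{-1}$ for each factor $R\left(\lambda_j,f_j\right)$ of a fixed representative of $a$; this $c_a$ depends only on $a$, not on $\pi$.

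For item 2 I would first recast the basic identity in manifestly positive form: expanding and inserting $r^\ast r = r r^\ast = \tfrac{i}{2\lambda}\left(r - r^\ast\right)$ gives
\[ \tfrac{1}{\lambda^2}\mathds{1} - R\left(\lambda,f\right)^\ast R\left(\lambda,f\right) = c^\ast c, \]
for an explicit $c \in \mathcal{R}_0$ affine in $R\left(\lambda,f\right)$. I would then run the GNS construction in the usual way: the form $\langle a, b\rangle = \omega\left(a^\ast b\right)$ is positive semidefinite on $\mathcal{R}_0$, its radical $N_\omega = \{a : \omega\left(a^\ast a\right) = 0\}$ is a left ideal by Cauchy--Schwarz, $\mathcal{H}_\omega$ is the completion of $\mathcal{R}_0 / N_\omega$, and $\pi_\omega\left(a\right)\left[b\right] = \left[ab\right]$ with cyclic vector $\Omega = \left[\mathds{1}\right]$. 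The only nonstandard point is boundedness: applying $\omega$ to $a^\ast\left(\tfrac{1}{\lambda^2}\mathds{1} - R^\ast R\right)a = \left(ca\right)^\ast\left(ca\right) \geq 0$ gives $\omega\left(a^\ast R^\ast R a\right) \leq \lambda^{-2}\omega\left(a^\ast a\right)$, i.e. $\norm{\pi_\omega\left(R\left(\lambda,f\right)\right)\left[a\right]}^2 \leq \abs{\lambda}^{-2}\norm{\left[a\right]}^2$ on the dense domain. Hence $\pi_\omega\left(R\left(\lambda,f\right)\right)$ extends to a bounded operator of norm at most $\abs{\lambda}^{-1}$, and since every element of $\mathcal{R}_0$ is a polynomial in the generators, every $\pi_\omega\left(a\right)$ is bounded, so $\pi_\omega$ is a cyclic $\ast$-representation by bounded operators.

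The genuinely substantive step is the first one — establishing the factorisation identity. Everything downstream (the $C^\ast$-norm bound, the polynomial estimate, the GNS machinery) is routine; what is not routine, and what distinguishes $\mathcal{R}_0$ from an arbitrary unital $\ast$-algebra whose GNS representations may fail to be bounded, is precisely that the defining relations force $\tfrac{1}{\lambda^2}\mathds{1} - R\left(\lambda,f\right)^\ast R\left(\lambda,f\right)$ to be a positive element admitting an explicit square root inside $\mathcal{R}_0$. I therefore expect the main care to be needed in extracting commutativity of $R\left(\lambda,f\right)$ and $R\left(-\lambda,f\right)$ together with the relation $r - r^\ast = -2i\lambda\, r^\ast r$ cleanly from relation 4, since these are exactly the two ingredients that make the affine element $r + \tfrac{i}{2\lambda}\mathds{1}$ factor as stated.
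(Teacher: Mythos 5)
Your proof is correct: the identities you extract from relations 1, 2 and 4 do hold in $\mathcal{R}_0$ --- symmetrising relation 4 gives normality, $\mu = -\lambda$ gives $r - r^{\ast} = -2i\lambda\, r^{\ast}r$, hence $\left(r + \tfrac{i}{2\lambda}\mathds{1}\right)^{\ast}\left(r + \tfrac{i}{2\lambda}\mathds{1}\right) = \tfrac{1}{4\lambda^{2}}\mathds{1}$, and the positive variant you invoke for GNS boundedness exists explicitly with $c = R\left(\lambda,f\right) + \tfrac{i}{\lambda}\mathds{1}$, since $c^{\ast}c = R^{\ast}R - \tfrac{i}{\lambda}\left(R - R^{\ast}\right) + \tfrac{1}{\lambda^{2}}\mathds{1} = \tfrac{1}{\lambda^{2}}\mathds{1} - R^{\ast}R$. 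The paper itself states this proposition without proof, importing it from \cite{01_BG}, and your argument (normality plus the resolvent-circle identity for the norm bound, and the square factorisation feeding the GNS form estimate $\omega\left(a^{\ast}R^{\ast}Ra\right) \leq \lambda^{-2}\omega\left(a^{\ast}a\right)$) is essentially the Buchholz--Grundling argument, differing only in that you replace their spectral-calculus step by the direct $C^{\ast}$-identity computation $\norm{s}^{2} = \norm{s^{\ast}s}$.
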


\begin{remark}
    \justifying
    The importance of the foregoing result lies in the following fact: a $\ast-$algebra can have unbounded representations, but, as long as the GNS-construction is concerned, the corresponding representation is bounded. $\hfill \square$
\end{remark}

\begin{definition}
    \justifying
    Let $\mathcal{R}_0$ over $\left( X,\sigma \right)$ be as in Definition \ref{Def2.1}. Its universal representation $\pi_u: \mathcal{R}_0 \longrightarrow \mathcal{B}\left(\mathcal{H}\right)$ is given by
    \begin{equation*}
        \pi_u\left( a \right) \doteq \bigoplus \left\{ \pi_\omega(a) \mid \omega \in \mathfrak{S} \right\} \quad \text{and} \quad \norm{a}_u \doteq \norm{\pi_u\left( a \right)} \equiv \underset{\omega \in \mathfrak{S}}{\sup} \, \norm{\pi_{\omega} \left( a \right)}
    \end{equation*}
    \noindent denotes the corresponding enveloping $C^{\ast}-$norm, where $\mathfrak{S}$ is the set of all positive and normalized\footnote{i.e. $\omega{\left(\mathds{1}\right)} = 1$} (algebraic) states over $\mathcal{R}_0$. The completion of $\mathcal{R}_0/ \ker \, \norm{\cdot}_u$ with respect to $\norm{\cdot}_u$ is denoted by $\mathcal{R}\left(X,\sigma \right)$ and defines the \textbf{resolvent algebra} over $\left(X,\sigma \right)$. $\hfill \square$
    
\end{definition}

\begin{remark}
    \justifying
    Because of Proposition \ref{Prop. 2.1}(1), $\norm{\cdot}_u$ is well-defined, because of $\norm{\pi_\omega \left( a \right)} \leq c_a$, for all $\omega \in \mathfrak{S}$, hence
    \begin{equation*}
        \underset{\omega \in \mathfrak{S}}{\sup} \, \norm{\pi_\omega \left( a \right)} \leq c_a < \infty.
    \end{equation*}
    $\hfill \square$
\end{remark}

\begin{remark}
    \justifying
    It is reported that, given $f \in X$, the map
    \begin{equation*}
        R\left(\cdot, f \right): \, \lambda \in \mathbb{R} \setminus \{0\} \longmapsto R\left(\lambda, f \right) \in \mathcal{R}\left( X,\sigma \right)
    \end{equation*}
    is analytic. Such a circumstance allows for the possibility of having complex-valued $\lambda$ arguments for $R\left( \lambda, f \right)$ as long as these are not purely imaginary. Differently said, $\mathcal{R}\left(X,\sigma \right)$ contains $R\left( z, f \right)$, with $z \in \mathbb{C} \setminus i\mathbb{R}$ too.
    $\hfill \square$
\end{remark}

\begin{definition}
    \justifying
    Let $\mathcal{H}$ be a Hilbert space and $\left( X,\sigma \right)$ be a symplectic vector space. Given $S \subseteq X$, a representation $\pi: \, \mathcal{R}\left( X, \sigma \right) \longrightarrow \mathcal{B}\left( \mathcal{H} \right)$ is said \textbf{regular on} $S$ if and only if $\ker\left\{ \pi \left[ R \left(1, f \right) \right]\right\} = \left\{ 0 \right\}$ for all $f \in S$. Consequently, a state $\omega$ of $\mathcal{R}\left( X, \sigma \right)$ is said  \textbf{regular on} $S$ if and only if its GNS-representation is regular on $S$. A representation (state) is simply said \textbf{regular} if it is regular on $X$. The set of all regular representations of $\mathcal{R}\left( X, \sigma \right)$ on $\mathcal{H}$ is denoted by $\text{reg}\left( \mathcal{R}\left(X, \sigma \right), \mathcal{H} \right)$, while the set of all regular states by $\mathfrak{S}_r\left(\mathcal{R} \left(X, \sigma \right)\right)$.
    $\hfill \square$
\end{definition}

\begin{remark}
    \justifying
    The importance of the notion of regular representation stems from the fact that, given the Hilbert space $\mathcal{H}$ together with the regular representation $\pi: \, \mathcal{R}\left( X, \sigma \right) \longrightarrow \mathcal{B} \left(\mathcal{H}\right)$, for all $f \in X$, one can define the linear operator $\left(\phi_{\pi} \left( f \right), \, \mathcal{D}_{\phi_{\pi} \left( f \right)} \right)$, where
    \begin{align*}
        \mathcal{D}_{\phi_{\pi}} \left( f \right) & = \text{Ran}\left[ R \left(1, f \right) \right] \\
        \phi_{\pi} \left( f \right) & = R\left( 1, f\right)^{-1} + i\mathds{1}.
    \end{align*}
    Such an operator is proved to be self-adjoint and its definition is independent on the choice of $\lambda \equiv 1$. $\hfill \square$
\end{remark}

\begin{theorem}\label{Prop. 2.2}
    \justifying
    Let $\left(X,\sigma\right)$ be a symplectic vector space and $\mathcal{R}\left( X, \sigma \right)$ be the corresponding resolvent algebra. Every regular representation of $\mathcal{R}\left( X, \sigma \right)$ is faithful. $\hfill \blacksquare$
\end{theorem}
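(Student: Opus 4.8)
The goal is to show that $\ker\pi=\{0\}$ for an arbitrary regular representation $\pi$, after which faithfulness in the C$^\ast$-algebraic sense (isometry) is automatic. The plan is to first recast regularity in terms of self-adjoint field operators, then reduce the problem to finite-dimensional symplectic subspaces where a Stone--von Neumann type uniqueness is available, and finally pass to the limit.

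First I would invoke the preceding remark: since $\pi$ is regular, for every $f\in X$ the operator $\phi_{\pi}(f)=R(1,f)^{-1}+i\mathds{1}$ is self-adjoint and, using its stated $\lambda$-independence together with relations $3.$ and $4.$ of Definition \ref{Def2.1}, $\pi[R(\lambda,f)]=(\phi_{\pi}(f)-i\lambda\mathds{1})^{-1}$ for every $\lambda\in\mathbb{R}\setminus\{0\}$. By Stone's theorem this yields strongly continuous one-parameter unitary groups $t\mapsto\exp(it\phi_{\pi}(f))$, and relations $5.$ and $6.$ force them to obey the Weyl form of the canonical commutation relations. The resolvents are then recovered from these unitaries as Laplace transforms, for instance $\pi[R(\lambda,f)]=i\int_{0}^{\infty}e^{-\lambda t}\exp(-it\phi_{\pi}(f))\,dt$ for $\lambda>0$ (a strongly convergent integral), so that $\pi[\mathcal{R}(X,\sigma)]$ is exactly the C$^\ast$-algebra generated by a regular Weyl system. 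The content of the theorem is therefore that this generated C$^\ast$-algebra is, up to a canonical isomorphism, independent of the chosen regular representation.

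To extract that independence I would exploit the local structure of the resolvent algebra: $\mathcal{R}(X,\sigma)$ is the norm closure of $\bigcup_{Y}\mathcal{R}(Y,\sigma)$ over the finite-dimensional subspaces $Y\subseteq X$, with each $\mathcal{R}(Y,\sigma)$ a sub-C$^\ast$-algebra, and the restriction of $\pi$ to each $\mathcal{R}(Y,\sigma)$ is again regular. On a finite-dimensional $Y$ one splits off the radical of $\sigma|_{Y}$ (which contributes a commutative tensor factor whose spectrum regularity pins down) and applies the Stone--von Neumann theorem to the non-degenerate part; all regular representations of $\mathcal{R}(Y,\sigma)$ are then quasi-equivalent and faithful, as one checks by testing against a faithful regular (Fock, or Schr\"odinger) representation. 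Faithfulness on each piece gives $\norm{\pi(a)}=\norm{a}_{u}$ for $a\in\mathcal{R}(Y,\sigma)$; a density and continuity argument then upgrades this to $\norm{\pi(a)}=\norm{a}_{u}$ for every $a\in\mathcal{R}(X,\sigma)$, whence $\ker\pi=\{0\}$.

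The main obstacle is precisely the finite-dimensional uniqueness step and its globalization: Stone--von Neumann secures uniqueness of regular representations only in finite dimensions, so the entire argument hinges on the reduction to finite-dimensional subspaces, on controlling the degenerate directions of $\sigma|_{Y}$, and then on the uniformity required to transport the equality of norms to the inductive limit. This is where the structural results imported from \cite{01_BG} carry the real weight; the field-operator and Weyl-system manipulations of the first two steps are essentially routine once regularity is in hand.
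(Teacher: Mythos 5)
The paper never proves this proposition: it is stated with a closing $\blacksquare$ as imported background from \cite{01_BG}, so your sketch can only be measured against the Buchholz--Grundling argument, whose general architecture (field operators from regularity, Laplace-transform correspondence with regular Weyl systems as in Proposition \ref{Prop. 2.3}, reduction to finite-dimensional subalgebras, von Neumann uniqueness, density) you do reproduce. But there is a genuine circularity at the crux. The universal norm $\norm{\cdot}_u$ is a supremum over \emph{all} states, including non-regular ones, so faithfulness of a regular representation amounts to showing that this supremum is already attained in one regular representation, i.e.\ $\ker \pi_S = \{0\}$ for the Schr\"odinger representation of each finite-dimensional nondegenerate piece. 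The Stone--von Neumann theorem only compares regular representations \emph{with one another}; it says nothing about the singular representations that enter $\norm{\cdot}_u$. Your phrase ``faithful, as one checks by testing against a faithful regular (Fock, or Schr\"odinger) representation'' assumes exactly what is to be proved. Concretely: by the resolvent identity (relation $4.$ of Definition \ref{Def2.1}), every representation splits, for each $f$, into a regular part and a singular part on which $\pi\left[R\left(\lambda,f\right)\right] = 0$ for all $\lambda$, and one must show these singular quotients cannot enlarge the norm. The standard way to close this is to exhibit singular states as weak$^{\ast}$ limits of regular (Schr\"odinger vector) states: conjugation by $e^{is\phi_{\pi}(g)}$ shifts $\phi_{\pi}(f)$ by $s\,\sigma(g,f)\mathds{1}$, so $\pi\left[R\left(\lambda,f\right)\right] \rightarrow 0$ strongly as $s \rightarrow \infty$ whenever $\sigma(g,f) \neq 0$ --- a step entirely absent from your outline, and the one place where the real work sits.

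Your treatment of degenerate directions is, moreover, false as stated: regularity does \emph{not} pin down the spectrum of the commutative factor. Over $\left(\mathbb{R}^k, 0\right)$ the one-dimensional character determined by $\phi(f) = 0$ for all $f$ is perfectly regular (each $\pi\left[R\left(1,f\right)\right]$ is an invertible scalar) yet grossly non-faithful; the proposition simply fails for degenerate bilinear forms, which is why $(X,\sigma)$ must be symplectic. The correct reduction does not split off the radical of $\sigma|_{Y}$ at all: since $\sigma$ is nondegenerate on $X$, every finite-dimensional subspace is contained in a finite-dimensional \emph{nondegenerate} subspace $S$, these are cofinal, and $\bigcup_S \mathcal{R}(S,\sigma)$ is already dense --- note that nondegeneracy of $S$ is precisely what supplies the $g$ with $\sigma(g,f) \neq 0$ needed above. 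Finally, a smaller tacit assumption: identifying the C$^{\ast}$-subalgebra of $\mathcal{R}(X,\sigma)$ generated by $\left\{ R\left(\lambda,f\right) \mid f \in S \right\}$ with the abstractly defined $\mathcal{R}(S,\sigma)$ (i.e.\ isometry of the canonical morphism) is itself a nontrivial structural result of \cite{01_BG}, not a freebie, and it is of the same nature as the faithfulness statement you are trying to prove.
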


\begin{theorem}\label{Prop. 2.3}
    \justifying
    Let $\mathcal{H}$ be a Hilbert space and $\left(X, \sigma \right)$ be a symplectic vector space. Given a regular representation $\pi \in \text{reg} \left( \mathcal{R}\left(X, \sigma \right), \mathcal{H} \right)$, the mapping
    \begin{equation*}
        \tilde{\pi}: \, \delta_f \in \text{CCR} \left( X, \sigma \right) \longmapsto \tilde{\pi}\left( \delta_f \right) \doteq e^{i\phi_{\pi}\left( f \right)} \in \mathcal{B}\left( \mathcal{H} \right)
    \end{equation*}
    results in a regular representation of the Weyl algebra over $\left(X, \sigma \right)$; particularly,
    \begin{equation*}
        \pi \in \text{reg} \left( \mathcal{R}\left(X, \sigma \right), \mathcal{H} \right) \longmapsto \tilde{\pi} \in \text{reg}\left( \text{CCR} \left(X, \sigma \right), \mathcal{H} \right)
    \end{equation*}
    is a bijection preserving irreducibility and direct sums; its inverse is defined by
    \begin{equation*}
        \pi \left[ R\left(\lambda, f \right) \right] \doteq -i \int_0^{\eta \infty} \, e^{-\lambda t} \Tilde{\pi}\left(\delta_{-tf}\right) dt, \quad \eta \equiv sign(\lambda)
    \end{equation*}
    \noindent where the integral is understood in the strong operator topology. $\hfill \blacksquare$
\end{theorem}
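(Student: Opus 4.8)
The plan is to follow the strategy of Buchholz and Grundling (\cite{01_BG}), exploiting the fact recorded in the remark preceding the statement: for regular $\pi$ each field operator $\phi_\pi(f) = R(1,f)^{-1} + i\mathds{1}$ is self-adjoint and its definition is insensitive to the choice $\lambda \equiv 1$. The argument splits into three parts: (i) $\tilde\pi$ is a representation of the Weyl algebra; (ii) it is regular; (iii) the assignment $\pi \mapsto \tilde\pi$ is a bijection onto $\text{reg}(\text{CCR}(X,\sigma),\mathcal{H})$ with the stated integral inverse, respecting irreducibility and direct sums.

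For (i) I would first establish that $f \mapsto \phi_\pi(f)$ is real-linear on the relevant domains. The homogeneity $\phi_\pi(sf) = s\,\phi_\pi(f)$ follows directly from the scaling relation $3$ of Definition \ref{Def2.1}, which forces $R(1,sf)$ to be the resolvent of $s\,\phi_\pi(f)$; additivity is encoded, at the resolvent level, in relation $6$, guaranteeing that $\phi_\pi(f+g)$ is the closure of $\phi_\pi(f)+\phi_\pi(g)$ on a common core. Consequently $\tilde\pi(\delta_{sf}) = e^{is\phi_\pi(f)}$, so $s \mapsto \tilde\pi(\delta_{sf})$ is a one-parameter unitary group. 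The core of this part is the Weyl relation $e^{i\phi_\pi(f)}e^{i\phi_\pi(g)} = e^{-\frac{i}{2}\sigma(f,g)}\,e^{i\phi_\pi(f+g)}$, the exponentiated form of the canonical commutation relation carried, at the level of resolvents, by relation $5$. I would recover it by using relations $1$ and $4$ to identify $R(\lambda,f)$ with the resolvent $(i\lambda\mathds{1}-\phi_\pi(f))^{-1}$ of the self-adjoint field operator at the non-real point $i\lambda$ (relation $1$ fixing the normalization, relation $4$ being the resolvent identity), and then passing from the resolvent commutation relation to the unitary groups by Stone's theorem together with the functional calculus.

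Part (ii) is immediate: regularity of $\tilde\pi$ means strong continuity of $s \mapsto \tilde\pi(\delta_{sf}) = e^{is\phi_\pi(f)}$ for every $f$, which holds automatically by Stone's theorem since $\phi_\pi(f)$ is self-adjoint. For (iii) the inverse rests on the Laplace-transform identity $(i\lambda\mathds{1}-A)^{-1} = -i\int_0^{\eta\infty} e^{-\lambda t}\,e^{-itA}\,dt$, valid in the strong operator topology for any self-adjoint $A$ and $\lambda \in \mathbb{R}\setminus\{0\}$ with $\eta = \text{sign}(\lambda)$. Applying it to $A = \phi_\pi(f)$ and using $\tilde\pi(\delta_{-tf}) = e^{-it\phi_\pi(f)}$ reproduces exactly the displayed formula $\pi[R(\lambda,f)] = -i\int_0^{\eta\infty} e^{-\lambda t}\tilde\pi(\delta_{-tf})\,dt$, so the two constructions are mutually inverse, provided one checks that, starting from an arbitrary regular Weyl representation, the operators defined by this integral satisfy relations $1$--$6$ and that the self-adjoint generator of $s \mapsto \tilde\pi(\delta_{sf})$ coincides with the $\phi_\pi(f)$ recovered from them. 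Preservation of direct sums is clear from the fibrewise nature of both formulas, and preservation of irreducibility follows because $\{\pi[R(\lambda,f)]\}$ and $\{\tilde\pi(\delta_f)\}$ generate the same von Neumann algebra on $\mathcal{H}$ — each family lies in the strong closure of the other through the integral and its inverse — so the two representations share the same commutant.

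The main obstacle I anticipate is part (i): passing rigorously from the purely algebraic resolvent relations (especially the commutator relation $5$ and the additivity relation $6$) to the Weyl relations between the exponentials. The difficulty is one of unbounded-operator bookkeeping — domains, cores, and the justification of the functional-calculus manipulations — rather than of any genuinely new idea, since self-adjointness of the $\phi_\pi(f)$ is already secured by the preceding remark.
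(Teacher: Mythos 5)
The paper contains no proof of this proposition: it belongs to the recap of structural results and is quoted, $\blacksquare$ and all, from Buchholz--Grundling \cite{01_BG}, so the only ``paper proof'' is the citation. Your outline reproduces exactly the strategy of \cite{01_BG} --- self-adjoint fields $\phi_\pi(f)$ from the resolvents, homogeneity/additivity from relations 3 and 6, Weyl relations from the integrated relations 4--5, the Laplace-transform formula as inverse, and equality of generated von Neumann algebras for irreducibility and direct sums --- and the obstacle you flag in part (i) is indeed the genuine crux rather than mere bookkeeping (by Nelson's counterexample, infinitesimal CCR on a common core do \emph{not} imply the Weyl relations; it is precisely the bounded, everywhere-valid resolvent relations that allow the exponentiation, which is what the proof in \cite{01_BG} supplies).
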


\begin{remark}
    \justifying
    The foregoing result allows to state that, as long as $\left( X, \sigma \right)$ is finite-dimensional, the Stone-von Neumann theorem holds for the resolvent algebra too, i.e. $\mathcal{R}\left( X, \sigma \right)$ admits a unique (up to unitary equivalence) irreducible, regular representation, the Schr{\"o}dinger representation. $\hfill \square$
\end{remark}

\begin{theorem}\label{Prop. 2.4}
    \justifying
    Let $\left( X, \sigma \right)$ be a finite-dimensional symplectic space and $\pi_0: \, \mathcal{R}\left( X, \sigma \right) \longrightarrow \mathcal{B}\left( \mathcal{H}_0 \right)$ be an irreducible regular representation of $\mathcal{R}\left( X, \sigma \right)$. Then, there exists a unique closed two-sided ideal $\mathcal{K}$ of $\mathcal{R}\left( X, \sigma \right)$ isomorphic to the algebra of compact operators $\mathcal{B}_{\infty}\left( \mathcal{H}_0 \right)$, such that $\pi_0\left(\mathcal{K}\right) = \mathcal{B}_{\infty}\left(\mathcal{H}_0\right) \subset \mathcal{B}\left( \mathcal{H}_0 \right)$.
    $\hfill \blacksquare$
\end{theorem}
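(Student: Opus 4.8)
The plan is to reduce everything to an explicit computation in the Schr{\"o}dinger representation, produce a single nonzero compact operator in the image of $\pi_0$, and then close the argument with a standard irreducibility fact together with the faithfulness already granted by Proposition \ref{Prop. 2.2}.

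First I would record the two structural inputs already available. Since $\pi_0$ is regular, Proposition \ref{Prop. 2.2} makes it faithful, hence isometric; consequently $\pi_0\left[\mathcal{R}\left(X,\sigma\right)\right]$ is a C*-subalgebra of $\mathcal{B}\left(\mathcal{H}_0\right)$ that is $\ast$-isomorphic to $\mathcal{R}\left(X,\sigma\right)$ and, by irreducibility of $\pi_0$, acts irreducibly on $\mathcal{H}_0$. Moreover, because $\left(X,\sigma\right)$ is finite-dimensional, the Stone--von Neumann uniqueness noted after Proposition \ref{Prop. 2.3} lets me assume, up to unitary equivalence (which preserves both compactness and the lattice of ideals), that $\pi_0$ is the Schr{\"o}dinger representation on $L^2\left(\mathbb{R}^n\right)$, $2n=\dim X$, with the $\phi_{\pi_0}\left(f\right)$ the canonical field operators obeying $\left[\phi_{\pi_0}\left(f\right),\phi_{\pi_0}\left(g\right)\right]=i\sigma\left(f,g\right)\mathds{1}$ on a common core. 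The candidate ideal is then forced to be $\mathcal{K}\doteq\pi_0^{-1}\left[\mathcal{B}_\infty\left(\mathcal{H}_0\right)\right]$.

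The crux is to exhibit a nonzero compact operator in $\pi_0\left[\mathcal{R}\left(X,\sigma\right)\right]$. I would fix a symplectic (Darboux) basis $f_1,g_1,\dots,f_n,g_n$ of $X$ and consider the single algebra element $a\doteq\prod_{j=1}^n R\left(\lambda_j,f_j\right)\,\prod_{j=1}^n R\left(\mu_j,g_j\right)$. Since $\sigma\left(f_i,f_j\right)=0$ the operators $\phi_{\pi_0}\left(f_j\right)$ mutually commute, so $\prod_j \pi_0\left[R\left(\lambda_j,f_j\right)\right]$ is the bounded function $H\left(\phi_{\pi_0}\left(f_1\right),\dots,\phi_{\pi_0}\left(f_n\right)\right)$ with $H\left(p\right)=\prod_j\left(p_j-i\lambda_j\right)^{-1}$, and likewise $\prod_j \pi_0\left[R\left(\mu_j,g_j\right)\right]=K\left(\phi_{\pi_0}\left(g_1\right),\dots,\phi_{\pi_0}\left(g_n\right)\right)$ with $K\left(q\right)=\prod_j\left(q_j-i\mu_j\right)^{-1}$. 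Both $H,K\in L^2\left(\mathbb{R}^n\right)$, since $\int\left(s^2+c^2\right)^{-1}ds=\pi/\abs{c}<\infty$ for every factor. After the symplectic change of variables identifying the two commuting families with the standard momenta and positions, $\pi_0\left(a\right)=H\left(P\right)K\left(Q\right)$ is therefore Hilbert--Schmidt, with $\norm{H\left(P\right)K\left(Q\right)}_{HS}=\left(2\pi\right)^{-n/2}\norm{H}_2\norm{K}_2$, hence a nonzero compact operator. This Hilbert--Schmidt/compactness estimate is the step I expect to carry the real weight; the only points to handle with care are the joint functional calculus for the commuting field operators and the normalisation conventions fixed by relations $3.$ and $4.$ of Definition \ref{Def2.1}.

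With one nonzero compact in hand, I would invoke the standard fact that a C*-algebra acting irreducibly on a Hilbert space and containing a single nonzero compact operator must already contain the whole ideal $\mathcal{B}_\infty\left(\mathcal{H}_0\right)$: starting from the positive compact $\pi_0\left(a\right)^\ast\pi_0\left(a\right)$ one extracts a finite-rank spectral projection by continuous functional calculus, and then irreducibility generates all rank-one operators, whose closure is $\mathcal{B}_\infty\left(\mathcal{H}_0\right)$. Thus $\mathcal{B}_\infty\left(\mathcal{H}_0\right)\subseteq\pi_0\left[\mathcal{R}\left(X,\sigma\right)\right]$, so $\pi_0\left(\mathcal{K}\right)=\mathcal{B}_\infty\left(\mathcal{H}_0\right)$; being the preimage of a closed two-sided ideal under a $\ast$-homomorphism, $\mathcal{K}$ is a closed two-sided ideal, and the isometric restriction $\pi_0\big|_{\mathcal{K}}$ is a $\ast$-isomorphism of $\mathcal{K}$ onto $\mathcal{B}_\infty\left(\mathcal{H}_0\right)$. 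Uniqueness is then immediate from faithfulness: if $\mathcal{K}'$ is any closed two-sided ideal with $\pi_0\left(\mathcal{K}'\right)=\mathcal{B}_\infty\left(\mathcal{H}_0\right)$, injectivity of $\pi_0$ forces $\mathcal{K}'=\pi_0^{-1}\left[\mathcal{B}_\infty\left(\mathcal{H}_0\right)\right]=\mathcal{K}$.
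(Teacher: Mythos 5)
Your proof is correct, but there is nothing in the paper to measure it against: Proposition \ref{Prop. 2.4} is stated as imported background, with the closing $\blacksquare$ inside the statement and no proof supplied --- it is recalled from \cite{01_BG}. What you have written is, in substance, a self-contained reconstruction of the original Buchholz--Grundling argument: reduce to the Schr{\"o}dinger representation via the Stone--von Neumann uniqueness recorded after Proposition \ref{Prop. 2.3} (unitary equivalence indeed preserves $\mathcal{B}_{\infty}\left(\mathcal{H}_0\right)$ and the ideal lattice), observe that the product of resolvents taken along the two halves of a Darboux basis is represented by an operator of the form $H\left(P\right)K\left(Q\right)$ with $H,K \in L^{2}\left(\mathbb{R}^n\right)$, hence Hilbert--Schmidt with $\norm{H\left(P\right)K\left(Q\right)}_{HS} = \left(2\pi\right)^{-n/2}\norm{H}_{2}\norm{K}_{2} > 0$, and then invoke the classical fact that an irreducibly acting $C^{\ast}$-algebra containing one nonzero compact operator contains all of $\mathcal{B}_{\infty}\left(\mathcal{H}_0\right)$. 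The closing steps are also sound: $\mathcal{K} \doteq \pi_0^{-1}\left[\mathcal{B}_{\infty}\left(\mathcal{H}_0\right)\right]$ is a closed two-sided ideal because $\mathcal{B}_{\infty}\left(\mathcal{H}_0\right)$ is an ideal of $\mathcal{B}\left(\mathcal{H}_0\right)$ and $\pi_0$ is a continuous $\ast$-homomorphism, and faithfulness (Proposition \ref{Prop. 2.2}) correctly upgrades $\pi_0\left(\mathcal{K}'\right) = \pi_0\left(\mathcal{K}\right)$ to $\mathcal{K}' = \mathcal{K}$, since an injective map separates sets, which settles uniqueness in exactly the sense the statement asserts.

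Two small remarks. First, on the normalisation point you flagged: the paper's conventions are internally off by a sign --- relation $1.$ of Definition \ref{Def2.1} forces $\pi_0\left[R\left(\lambda,f\right)\right] = \left(i\lambda - \phi_{\pi_0}\left(f\right)\right)^{-1}$, whereas the remark defining $\phi_\pi\left(f\right) = R\left(1,f\right)^{-1} + i\mathds{1}$ yields $\left(\phi_{\pi_0}\left(f\right) - i\lambda\right)^{-1}$, the convention you adopted; either choice leaves $\abs{H}$ and $\abs{K}$ unchanged, so your $L^2$ and Hilbert--Schmidt estimates are unaffected. Second, you prove uniqueness only among ideals whose $\pi_0$-image equals $\mathcal{B}_{\infty}\left(\mathcal{H}_0\right)$, which is all that is required, but the hypothesis is in fact superfluous: a nonzero closed two-sided ideal of an irreducibly acting $C^{\ast}$-algebra itself acts irreducibly, and the only irreducible representation of $\mathcal{B}_{\infty}\left(\mathcal{H}_0\right)$ up to unitary equivalence is the identical one, so any ideal of $\mathcal{R}\left(X,\sigma\right)$ merely abstractly isomorphic to the compacts automatically has image $u\,\mathcal{B}_{\infty}\left(\mathcal{H}_0\right)u^{\ast} = \mathcal{B}_{\infty}\left(\mathcal{H}_0\right)$ and hence coincides with your $\mathcal{K}$.
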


\subsection{Dynamics on $\mathcal{R}\left(X, \sigma \right)$ }
\justifying

\begin{definition}\label{Dynamics}
    \justifying
    Let $\left( X, \sigma \right) \equiv \left( \mathbb{R}^{2N}, \sigma \right), \, N \in \mathbb{N}$ be the standard symplectic vector space and $\pi_S: \, \mathcal{R}\left( X, \sigma \right) \longrightarrow \mathcal{B}\left( L^2\left(\mathbb{R}^N\right) \right)$ be the Schr{\"o}dinger representation of $\mathcal{R}\left( \mathbb{R}^{2N},\sigma \right)$. A $\left( H, \mathcal{D}_{H} \right)$ self-adjoint Hamiltonian on $L^2\left(\mathbb{R}^N\right)$ \textbf{induces a dynamics on} $\mathcal{R}\left(X,\sigma \right)$ if and only if
    \begin{equation*}
        e^{itH} \left\{\pi_S \left[ \mathcal{R} \left( X, \sigma \right) \right] \right\} e^{-itH} \subseteq \pi_S \left[ \mathcal{R} \left( X, \sigma \right) \right], \quad \forall t \in \mathbb{R}.
    \end{equation*}
    $\hfill \square$
\end{definition}

\begin{remark}
    \justifying
    Such a definition is well-posed, because of the regularity, hence the injectivity, of the Schr{\"o}dinger representation, together with the fact that, because of the Stone-von Neumann theorem, such a representation is unique up to unitary equivalence. $\hfill \square$
\end{remark}

\begin{theorem}\label{Prop. 2.5}
    \justifying
    Let $\mathcal{R} \left( \mathbb{R}^2, \sigma \right)$ be and $\pi_S: \mathcal{R} \left( \mathbb{R}^2, \sigma \right) \longrightarrow \mathcal{B}\left( L^2\left(\mathbb{R}\right) \right)$ be the corresponding Schr{\"o}dinger representation. Given $V \in C_0\left( \mathbb{R} \right)$, the self-adjoint Hamiltonian $\left(H \equiv H_0 + V, \mathcal{D}_{H} \right)$ induces a dynamics on $\mathcal{R}\left( \mathbb{R}^2, \sigma \right)$. Moreover $R_{H}\left( \lambda \right) = \left( H - i\lambda \mathds{1} \right)^{-1} \in \pi_S\left[\mathcal{R}\left(\mathbb{R}^2, \sigma\right)\right]$ for all $\lambda \in \mathbb{R} \setminus \{0\}$, i.e. $H$ is affiliated to $\mathcal{R}\left(\mathbb{R}^2, \sigma\right)$. $\hfill \blacksquare$
\end{theorem}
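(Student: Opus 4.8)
The plan is to separate the statement into its two claims---invariance of $\pi_S[\mathcal{R}(\mathbb{R}^2,\sigma)]$ under the interacting dynamics, and affiliation of $H$---and to reduce both to one structural input together with Proposition \ref{Prop. 2.4}. Throughout I write $P,Q$ for the momentum and position operators on $L^2(\mathbb{R})$, so that $H_0=P^2$ and the field operator attached to $f=(a,b)\in\mathbb{R}^2$ is $\phi_S(f)=aP+bQ$; then $\pi_S[R(\lambda,f)]$ is the resolvent $(\phi_S(f)-z)^{-1}$ of that self-adjoint field operator at a point $z\in\mathbb{C}\setminus\mathbb{R}$ fixed by $\lambda$. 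The structural input is the elementary product estimate: if $m,n\in C_0(\mathbb{R})$ then $m(P)n(Q)$ is compact (a norm limit of Hilbert--Schmidt operators); together with the fact, recorded in \cite{07_HS} and \cite{09_HK}, that $\mathcal{R}(\mathbb{R}^2,\sigma)$ is generated by the resolvents of all quadratures $\phi_S(f)$, and with the observation that, by Proposition \ref{Prop. 2.4} applied to the irreducible regular representation $\pi_S$, the whole ideal of compact operators $\mathcal{B}_{\infty}(L^2(\mathbb{R}))$ sits inside $\pi_S[\mathcal{R}(\mathbb{R}^2,\sigma)]$.

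Affiliation is the shorter half and I would treat it first. As $H$ is self-adjoint, $i\lambda\notin\sigma(H)$ for $\lambda\in\mathbb{R}\setminus\{0\}$, and the second resolvent identity reads
\[
(H-i\lambda\mathds{1})^{-1}=(H_0-i\lambda\mathds{1})^{-1}-(H_0-i\lambda\mathds{1})^{-1}\,V\,(H-i\lambda\mathds{1})^{-1}.
\]
The first summand is $g_\lambda(P)$ with $g_\lambda(\xi)=(\xi^2-i\lambda)^{-1}\in C_0(\mathbb{R})$, hence a $C_0$-function of $P$ and so an element of $\pi_S[\mathcal{R}(\mathbb{R}^2,\sigma)]$. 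In the second summand the factor $(H_0-i\lambda\mathds{1})^{-1}V=g_\lambda(P)\,V(Q)$ is a product $m(P)n(Q)$ with $m,n\in C_0(\mathbb{R})$, hence compact; composing on the right with the bounded operator $(H-i\lambda\mathds{1})^{-1}$ keeps it compact, so by Proposition \ref{Prop. 2.4} it too lies in $\pi_S[\mathcal{R}(\mathbb{R}^2,\sigma)]$. The resolvent $(H-i\lambda\mathds{1})^{-1}$ is therefore a sum of two elements of the algebra, for every $\lambda\neq 0$.

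For the dynamical stability, write $\alpha_t=e^{itH}(\,\cdot\,)e^{-itH}$ and $\alpha^0_t=e^{itH_0}(\,\cdot\,)e^{-itH_0}$. Since $\alpha_t$ is an isometric $\ast$-automorphism of $\mathcal{B}(L^2(\mathbb{R}))$, the set of $a\in\pi_S[\mathcal{R}(\mathbb{R}^2,\sigma)]$ with $\alpha_t(a)\in\pi_S[\mathcal{R}(\mathbb{R}^2,\sigma)]$ is a norm-closed $\ast$-subalgebra, so it suffices to treat a generator $A=\pi_S[R(\lambda,f)]=(\phi_S(f)-z)^{-1}$. Here the free part is free: $\alpha^0_t$ implements the symplectic flow $\phi_S(f)\mapsto\phi_S(T_tf)$ with $T_t(a,b)=(a+2tb,b)$, so $\alpha^0_t(A)=(\phi_S(T_tf)-z)^{-1}=\pi_S[R(\lambda,T_tf)]$ is again the image of a generator and lies in the algebra with no continuity input \cite{01_BG}. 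It remains to control the difference, for which I would use Duhamel's formula
\[
\alpha_t(A)-\alpha^0_t(A)=i\int_0^t \alpha_s\big([\,V,\alpha^0_{t-s}(A)\,]\big)\,ds .
\]
A direct computation gives $[\,V(Q),(\phi_S(g)-z)^{-1}\,]=-i\,a_g\,(\phi_S(g)-z)^{-1}\,V'(Q)\,(\phi_S(g)-z)^{-1}$, where $a_g$ is the $P$-component of $g$; the middle object $(\phi_S(g)-z)^{-1}V'(Q)$ is again of type $m(P)n(Q)$ after the metaplectic rotation diagonalising $\phi_S(g)$, hence compact, while in the degenerate case $g\parallel(0,1)$ one has $a_g=0$. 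Thus each integrand is compact (for $V\in C_0$ one first approximates $V$ uniformly by $C^1_0$-functions), is uniformly bounded by $2\|V\|_\infty\|A\|$, and is strongly continuous with values in the separable space $\mathcal{B}_{\infty}(L^2(\mathbb{R}))$; by Pettis measurability the strong integral is a Bochner integral, so $\alpha_t(A)-\alpha^0_t(A)\in\mathcal{B}_{\infty}(L^2(\mathbb{R}))\subseteq\pi_S[\mathcal{R}(\mathbb{R}^2,\sigma)]$. Adding back $\alpha^0_t(A)$ gives $\alpha_t(A)\in\pi_S[\mathcal{R}(\mathbb{R}^2,\sigma)]$, and passing to the generated algebra yields Definition \ref{Dynamics} for all $t$.

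The genuine obstacle is concentrated in the compactness of this correction term, and behind it the fact that on $\mathcal{R}(\mathbb{R}^2,\sigma)$ the free evolution $\alpha^0_t$ is not norm-continuous (the map $f\mapsto R(\lambda,f)$ is norm-discontinuous across directions), which is exactly why a naive norm-convergent Dyson series is unavailable and why one must instead route the entire effect of $V$ through the compact ideal. That this routing succeeds rests on the boundedness and $C_0$-decay of $V$ via the product estimate $m(P)n(Q)\in\mathcal{B}_{\infty}$, and I expect this to demand the most care: the metaplectic reduction of $\phi_S(g)$, the taming of the degenerate direction by the vanishing prefactor $a_g$, and the measurability needed for the Bochner integral are precisely where the machinery of \cite{07_HS} and \cite{09_HK} does the real work. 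By contrast, the boundedness of $V$ trivialises every genuinely analytic difficulty---precisely the feature lost for the singular $\delta$-interactions of the later sections, which must therefore be reached through the resolvent formula rather than through this direct perturbative comparison.
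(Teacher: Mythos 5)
Your proposal addresses a statement this paper never actually proves: Proposition \ref{Prop. 2.5} belongs to the Section 2 recap and is imported from \cite{01_BG} (hence the closing tombstone with no proof environment), so the honest comparison is with Buchholz--Grundling's original argument. Your affiliation half is, up to notation, exactly the standard argument from that source: second resolvent identity, $g_\lambda(P)\in\pi_S\left[\mathcal{R}\left(\mathbb{R}^2,\sigma\right)\right]$ as a $C_0$-function of a quadrature, compactness of $m(P)n(Q)$ for $m,n\in C_0$, and Proposition \ref{Prop. 2.4}; this part is complete and correct. For the dynamical half you replace the interaction-picture cocycle $e^{itH}e^{-itH_0}$ and its Dyson expansion, around which \cite{01_BG} (and, for the singular potentials of Section 3, \cite{07_HS}) organizes the proof, by a single Duhamel identity for the conjugated observable, concentrating the whole interaction in one commutator integral. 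The mechanism is the same in both treatments --- free covariance $\alpha^0_t\left(\pi_S\left[R(\lambda,f)\right]\right)=\pi_S\left[R(\lambda,T_tf)\right]$ plus routing of the correction through the compact ideal --- but your packaging is leaner, and your measurability argument (integrand compact-valued, norm-bounded, SOT-continuous; $\mathcal{B}_\infty$ separable with trace-class dual, so Pettis upgrades the strong integral to a Bochner integral inside $\mathcal{B}_\infty$) is correct and tidier than the term-by-term kernel control a series approach needs. Your commutator computation, the metaplectic reduction to $m(P)n(Q)$, and the treatment of the degenerate direction via $a_g=0$ are all sound, as is the reduction to generators through the norm-closed $\ast$-subalgebra $\left\{a : \alpha_t(a)\in\pi_S\left[\mathcal{R}\left(\mathbb{R}^2,\sigma\right)\right]\right\}$.

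Two caveats. The one genuine soft spot is the Duhamel formula itself: since, as you correctly note, $s\mapsto\alpha^0_s(A)$ is norm-discontinuous, and since it is not obvious that resolvents of rotated quadratures respect $\mathcal{D}(H_0)$, the differentiation of $s\mapsto e^{isH}\alpha^0_{t-s}(A)e^{-isH}\psi$ requires justification which you assert rather than supply. It does go through --- commutator identities such as $\left[P,\left(\phi_S(g)-z\right)^{-1}\right]=i b_g\left(\phi_S(g)-z\right)^{-2}$, with $b_g$ the $Q$-component of $g$, show these resolvents preserve $\mathcal{D}(P)$ and $\mathcal{D}(H_0)$, after which the derivative exists strongly on $\mathcal{D}(H)$ and extends by density --- but as written this is the only step of your argument that is a sketch, and it is precisely the regularity that the explicit kernel machinery of \cite{07_HS} is designed to replace in the singular case. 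Second, a citation slip: the generation of $\mathcal{R}\left(\mathbb{R}^2,\sigma\right)$ by resolvents of the quadratures $\phi_S(f)$, and the resulting inclusion of $C_0$-functions of each $\phi_S(f)$, come from \cite{01_BG} (Definition \ref{Def2.1} plus spectral calculus), not from \cite{07_HS} or \cite{09_HK}; those references enter this paper only for the $\delta$-interactions, where the boundedness of $V$ --- on which your entire Duhamel comparison rests, as you rightly observe at the end --- is exactly what is lost.
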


\section{Point Interactions}
\justifying

Proposition \ref{Prop. 2.5} guarantees that, for one-dimensional quantum mechanical systems, Schr{\"o}dinger Hamiltonians with $C_0\left( \mathbb{R} \right)$ potentials do induce dynamics on $\mathcal{R}\left( \mathbb{R}^2, \sigma \right)$. This section, on the other hand, addresses the stability problem of $\mathcal{R}\left( \mathbb{R}^2, \sigma \right)$ under the action of symbolic Hamiltonians\footnote{In this section, $H_0 = -\frac{d^2}{dx^2}$ is assumed.} as
    \begin{equation}\label{Hamiltonian Fixed Centers}
        H = -\frac{d^2}{dx^2} + \sum_{i=1}^{N} \, \alpha_i \delta\left(x - x_i\right),
    \end{equation}
    \noindent with $N \in \mathbb{N} \cup \{\infty\}, \; x_i \in \mathbb{R}: \, x_i \neq x_j, \, \forall \, i,j, \; \alpha_i \in \mathbb{R}\setminus \{0\}, \, \forall i$, clearly not of Schr{\"o}dinger type. Definition \ref{Dynamics} requires showing
    \begin{equation}\label{thesis}
        e^{itH}\pi_S\left( a \right)e^{-itH} \in \pi_S \left[ \mathcal{R}\left( \mathbb{R}^2, \sigma \right)\right], \quad \forall a \in \mathcal{R}\left( \mathbb{R}^2,\sigma \right), \; \forall t \in \mathbb{R},
    \end{equation}
    \noindent hence the first issue to be dealt with is the explicit construction of $e^{-itH}$, $t \in \mathbb{R}$ given $H$ as in (\ref{Hamiltonian Fixed Centers}). \cite{07_HS} is extremely useful for the announced purpose; concretely, by observing that
    \begin{align*}
        \Gamma_{V}\left( t \right) & = \mathds{1} + \sum_{n \in \mathbb{N}} \, \left( -i \right)^{n} \int_0^{t} \, dt_n \cdots \int_0^{t_2} \, dt_1 \, V\left( t_1 \right) \cdots V\left( t_n \right) \\
        & = \mathds{1} + \sum_{n \in \mathbb{N}} \, \left( -i \right)^{n} \int_0^{t} \, dt_n \cdots \left\{ \int_0^{t_3} \, dt_2 \left[ \int_0^{t_2} \, dt_1 V\left( t_1 \right) \right] V\left( t_2 \right) \cdots \right\} V\left( t_n \right) \\
        & = \mathds{1} + \sum_{n \in \mathbb{N}} \, \left( - i \right)^{n} \int_0^{t} \, dt_n \; \Gamma_{V, \; \left(n-1\right)}\left( t_n \right) V\left( t_n \right) \equiv \mathds{1} + \sum_{n \in \mathbb{N}} \, \left(-i\right)^{n} \, \Gamma_{V, \; \left(n\right)}\left( t \right),
    \end{align*}
    for all $t \in \mathbb{R}$, i.e. by considering
    \begin{equation}\label{Interaction Picture 03}
        \Gamma_{V, \; \left(n\right)} \left(t\right) = \int_0^{t} \, dt_n \; \Gamma_{V, \; \left(n-1\right)}\left( t_n \right) V\left( t_n \right), \quad t \in \mathbb{R}, \, n \in \mathbb{N}
    \end{equation}
    \noindent with $\Gamma_{V, \; \left(0\right)}\left(t\right) = \mathds{1}$ for all $t \in \mathbb{R}$, $\Gamma_{V}\left( t \right)$ can be explicitly built out of the Fourier domain counterparts of $\Gamma_{V, \; \left(n\right)} \left(t\right)$, in turn defined by the integral kernels 
    \begin{align}
        K_{t, \, (1)}\left( p,q \right) & = \left[ \mathfrak{F} \Gamma_{V, \; \left(1\right)}\left( t \right) \mathfrak{F}^{-1} \right]\left(p,q \right) =  \left[\frac{e^{it\left(p^{2}-q^{2}\right)} - 1}{i\left(p^{2}-q^{2}\right)}\right] \frac{\tilde{V}\left(p-q\right)}{\sqrt{2\pi}} \label{Kernel 1}\\
        K_{t, \, (n)}\left(p,q\right) & = \left[ \mathfrak{F} \Gamma_{V, \; \left(n\right)}\left( t \right) \mathfrak{F}^{-1} \right]\left(p,q \right) =  \int_{0}^{t} dt_n \int_{\mathbb{R}} dz_{n-1} K_{t_n, \, (n-1)}\left(p, z_{n-1}\right) e^{it_{n} \left( z_{n-1}^{2} - q^{2} \right) } \frac{\tilde{V}\left( z_{n-1} - q \right)}{\sqrt{2\pi}}, \label{Kernel n}
    \end{align}
    \noindent $t,p,q \in \mathbb{R}, \, n \in \mathbb{N}: n \geq 2$, where $\mathfrak{F}$ is the Fourier-Plancherel operator. $V$ is then allowed to be a distribution on $\mathbb{R}$ whose Fourier transform $\Tilde{V}$ is a $L^{\infty}$ function such that $\overline{\Tilde{V}\left( p \right)} = \Tilde{V}\left( - p \right), \; p \in \mathbb{R}$. Further, \cite{07_HS} ensures that each $K_{t, \, (n)}$ is a bounded operator on $L^2\left( \mathbb{R} \right)$ for all $t \in \mathbb{R}$ and that the Dyson series $\sum_{n \in \mathbb{N}} \, K_{t, \, \left( n \right)}$ converges in the uniform norm topology; set, then, $K\left(t\right) = \mathds{1} + \sum_{n \in \mathbb{N}} \, K_{t, \, \left( n \right)}$, 
    \begin{equation*}
        U\left(t\right) = e^{-itH_0}\left[\mathfrak{F}^{-1} K\left(t\right) \mathfrak{F}\right], \quad t \in \mathbb{R}
    \end{equation*}
    \noindent gives the rigorous unitary time evolution operator of a system governed by the symbolic Hamiltonian $H = H_0 + V$. $\hfill \square$


\subsection{One Fixed-Center Point Interaction}

One spinless particle undergoing a unique point interaction placed in a fixed location of the real line is considered. Its formal Hamiltonian is
\begin{equation}
    H =  -\frac{d^2}{dx^2} + \alpha \delta(x-x_0), \quad \, x_0 \in \mathbb{R}, \label{Single Point Hamiltonian}
\end{equation}

\noindent where $\alpha \in \mathbb{R}\setminus \{0\}$ is the coupling constant and $x_0 \in \mathbb{R}$ is the $\delta-$location. Given $V = \alpha \delta\left(\cdot - x_0\right)$, one has
\begin{itemize}
    \item $\tilde{V}\left( p \right) = \frac{\alpha}{\left(\sqrt{2\pi}\right)} e^{-ipx_0}, \; p \in \mathbb{R} \implies \tilde{V} \in L^{\infty}\left( \mathbb{R} \right)$,
    \item $\overline{\Tilde{V}\left( p \right)} = \overline{\frac{\alpha}{\left(\sqrt{2\pi}\right)} e^{-ipx_0}} \equiv \frac{\alpha}{\left(\sqrt{2\pi}\right)} e^{-i \left( - p\right) x_0} \equiv \Tilde{V}\left( - p \right), \quad p \in \mathbb{R}$.
\end{itemize}
Consequently,
\begin{align}
    K_{t, \, (1)}^{\left(\alpha\right)} \left( p,q \right) & = \frac{\alpha}{2\pi} \left[\frac{e^{it\left(p^{2}-q^{2}\right)} - 1}{i\left(p^{2}-q^{2}\right)}\right] e^{-i(p-q)x_0} \label{eq: Kernel 01 Delta} \\
    K_{t, \, (n)}^{\left(\alpha\right)} \left( p,q \right) & = \frac{\alpha}{2\pi} \int_{0}^{t}dt_n \int_{\mathbb{R}} dz_{n-1} \; K_{t_n, \, (n-1)}^{\left(\alpha\right)}\left(p - z_{n-1}\right) e^{it_{n}\left( z_{n-1}^{2} - q^2\right)} e^{-i(z_{n-1} - q)x_0}, \quad n \in \mathbb{N} \label{eq: Kernel 02 Delta}
\end{align}
\noindent allow to build $\Gamma_{V}(t) \equiv \Gamma_{\alpha}(t)$ as described, hence the unitary time evolution operator $U_{\alpha}\left(t\right) = e^{-itH_0} \Gamma_{\alpha}(t)$ corresponding to (\ref{Single Point Hamiltonian}), for all $t \in \mathbb{R}$. $\hfill \square$

\begin{remark}\label{Clue Remark 1P}
    \justifying
    Endowed with the unitary time evolution operator, the resolvent algebra $\mathcal{R}\left( \mathbb{R}^2, \sigma \right)$ stability remains to be proved; the following strategy is adopted: given a non-negative smooth function of compact support $W$\footnote{It does not harm generality assuming $\int_{\mathbb{R}} W = 1$}, introducing $W_{\epsilon}$ as
    \begin{equation*}
        W_{\epsilon}: \, x \in \mathbb{R} \longmapsto W_{\epsilon}\left(x\right) \doteq \frac{1}{\epsilon}W\left( \frac{x}{\epsilon} \right) \in \mathbb{R}, \quad \epsilon > 0,
    \end{equation*}
    along with the Schr{\"o}dinger Hamiltonian $\left(H_{\epsilon} = H_0 + \alpha W_{\epsilon}, \, \mathcal{D}_{H_0} \right)$, Proposition \ref{Prop. 2.5} allows to claim that $\exp\left(-itH_{\epsilon}\right) \in \pi_S\left[\mathcal{R}\left(\mathbb{R}^2,\sigma\right)\right]$ for all $t \in \mathbb{R}$. Therefore, should
    \begin{equation}\label{purpose1}
        \norm{U_\alpha \left( t \right) - e^{-itH_{\epsilon}}}_{\mathcal{B}\left(L^2\left(\mathbb{R}\right)\right)} \underset{\epsilon \downarrow 0}{\longrightarrow} 0
    \end{equation}
    hold, the stability of $\mathcal{R}\left(\mathbb{R}^2,\sigma \right)$ would easily follow; infact, for all $a \in \mathcal{R}\left( \mathbb{R}^2,\sigma \right), \, t \in \mathbb{R}$,
    \begin{align*}
        & \norm{U_\alpha \left( t \right)^{\ast} \pi_S\left( a \right) U_\alpha \left( t \right) - e^{itH_{\epsilon}} \pi_S \left(a \right) e^{-itH_{\epsilon}} } = \\
        = & \norm{ U_\alpha \left( t \right)^{\ast} \pi_S\left( a \right) U_\alpha \left( t \right) - e^{itH_{\epsilon}} \pi_S \left(a \right) U_\alpha \left( t \right) + e^{itH_{\epsilon}} \pi_S\left( a \right) U_\alpha \left( t \right) - e^{itH_{\epsilon}} \pi_S \left(a \right) e^{-itH_{\epsilon}}} \leq \\
        \leq & \norm{U_\alpha \left( t \right)^{\ast} - e^{itH_{\epsilon}}} \norm{\pi_S\left( a \right)} + \norm{\pi_S\left( a \right)} \norm{U_\alpha \left( t \right) - e^{-itH_{\epsilon}}} \underset{\epsilon \downarrow 0}{\longrightarrow} 0.
    \end{align*}
    $\hfill \square$
\end{remark}

\begin{remark}\label{Auxiliary for One Point Case Result}
    \justifying
    \begin{enumerate}
        \item $\left\{W_{\epsilon}\right\}_{\epsilon > 0}$ converges to $\delta$ in $\mathscr{D}^{\, \prime}(\mathbb{R})$ as $\epsilon \rightarrow 0$; infact, given a whatever compactly supported real smooth function $f$ on $\mathbb{R}$,
        \begin{align*}
            \underset{\epsilon \downarrow0}{\lim} \, \int_{\mathbb{R}} \left[\frac{1}{\epsilon} W\left(\frac{x-x_0}{\epsilon}\right)\right] f\left(x\right) dx & = \int_{\mathbb{R}} W\left(x\right) \left[\underset{\epsilon\downarrow0}{\lim}f\left(\epsilon x+x_0\right)\right] dx = \left(\int_{\mathbb{R}}W\right) f\left(x_0\right) \equiv \\ & \equiv \int_{\mathbb{R}} \left[ \delta \left(x-x_0\right)\right] f\left(x\right)dx,
        \end{align*}
        by using the Lebesgue dominated convergence theorem and $\int_{\mathbb{R}} W = 1$.
        \item $\left\{ \mathcal{F}\left[ W_{\epsilon}\left(\cdot - x_0\right)\right] \right\}_{\epsilon}$\footnote{$\mathcal{F}$ denotes the $L^1$-Fourier transform operator.} is point-wise convergent to $\left[\left(\sqrt{2\pi}\right)^{-1} e^{-i\left( \cdot \right)x_0} \right]$ as $\epsilon \downarrow 0$ for all $x_0 \in \mathbb{R}$ and there exists $M \in \mathbb{R}^+$ such that $\abs{ \tilde{W}_{\epsilon} \left(p\right)} \leq M$, for all $\epsilon$ and $p$: straightforwardly,
        \begin{equation*}
            \underset{\epsilon\downarrow0}{\lim} \; \mathcal{F}\left[ W_{\epsilon}\left(\cdot - x_0\right)\right] \left( p \right) \equiv \underset{\epsilon\downarrow0}{\lim} \; \tilde{W}_{\epsilon}\left(p\right) =  \underset{\epsilon\downarrow0}{\lim} \; \frac{e^{-ipx_0}}{\sqrt{2\pi}} \int_{\mathbb{R}}W\left(x\right) e^{-i\left(\epsilon p\right)x} dx = \frac{e^{-ipx_0}}{\sqrt{2\pi}},\quad \forall p \in \mathbb{R},
        \end{equation*}
        by the Lebesgue dominated convergence theorem. Clearly
        \begin{equation*}
            \abs{\Tilde{W_\epsilon}(p)} \leq \frac{1}{\sqrt{2\pi}} \equiv M.
        \end{equation*} 
    \end{enumerate}
    $\hfill \square$
\end{remark}

\begin{theorem}\label{One Fixed-Center Result}
Let $\alpha \in \mathbb{R} \setminus \{0\}$ be and $W \in C^{\infty}_c \left( \mathbb{R}\right)$ as in Remark \ref{Clue Remark 1P}. For all $t \in \mathbb{R}$,
\begin{equation*}
    \norm{U_\alpha \left( t \right) - e^{-itH_\epsilon}}_{\mathcal{B}\left(L^2\left(\mathbb{R}\right)\right)} \underset{\epsilon\downarrow0}{\longrightarrow}0.
\end{equation*}

\noindent holds.

\end{theorem}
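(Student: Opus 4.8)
The plan is to compare the two unitaries through the interaction-picture representation of \cite{07_HS}, thereby reducing the statement to a term-by-term convergence of Dyson series. First I would observe that the approximating Hamiltonian $H_\epsilon = H_0 + \alpha W_\epsilon\left(\cdot - x_0\right)$ admits a genuine (Kato--Rellich) self-adjoint realisation whose potential $V_\epsilon = \alpha W_\epsilon\left(\cdot - x_0\right)$ meets the hypotheses of \cite{07_HS}: its Fourier transform $\tilde V_\epsilon = \alpha \tilde W_\epsilon$ lies in $L^{\infty}\left(\mathbb{R}\right)$ and obeys $\overline{\tilde V_\epsilon\left(p\right)} = \tilde V_\epsilon\left(-p\right)$. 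Hence $e^{-itH_\epsilon} = e^{-itH_0}\mathfrak{F}^{-1}K^{\left(\epsilon\right)}\left(t\right)\mathfrak{F}$ with $K^{\left(\epsilon\right)}\left(t\right) = \mathds{1} + \sum_{n\in\mathbb{N}}\left(-i\right)^n K_{t,\left(n\right)}^{\left(\epsilon\right)}$, the kernels $K_{t,\left(n\right)}^{\left(\epsilon\right)}$ being given by (\ref{Kernel 1})--(\ref{Kernel n}) with $\tilde V$ replaced by $\tilde V_\epsilon$. Since $U_\alpha\left(t\right) = e^{-itH_0}\mathfrak{F}^{-1}K^{\left(\alpha\right)}\left(t\right)\mathfrak{F}$ has exactly the same shape, and since $e^{-itH_0}$ and $\mathfrak{F}$ are unitary, the quantity to be estimated collapses to
\[
\norm{U_\alpha\left(t\right) - e^{-itH_\epsilon}} = \norm{\sum_{n\in\mathbb{N}}\left(-i\right)^n\left(K_{t,\left(n\right)}^{\left(\alpha\right)} - K_{t,\left(n\right)}^{\left(\epsilon\right)}\right)}.
\]

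Next I would produce an $\epsilon$-independent summable majorant. By Remark \ref{Auxiliary for One Point Case Result} one has $\abs{\tilde V_\epsilon\left(p\right)} \le \abs{\alpha}/\sqrt{2\pi}$ for every $p$ and every $\epsilon$, which coincides with $\norm{\tilde V}_\infty$; feeding this uniform bound into the estimates of \cite{07_HS} that render the Dyson series convergent in operator norm furnishes constants $c_n\left(t\right) \ge 0$, depending only on $t$ and on $\abs{\alpha}/\sqrt{2\pi}$, with $\norm{K_{t,\left(n\right)}^{\left(\alpha\right)}}, \norm{K_{t,\left(n\right)}^{\left(\epsilon\right)}} \le c_n\left(t\right)$ and $\sum_{n}c_n\left(t\right) < \infty$. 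Consequently the series of differences is dominated, uniformly in $\epsilon$, by $\sum_n 2\,c_n\left(t\right) < \infty$, so by a dominated-convergence (Tannery) argument for series it suffices to prove that each fixed-order difference satisfies $\norm{K_{t,\left(n\right)}^{\left(\alpha\right)} - K_{t,\left(n\right)}^{\left(\epsilon\right)}} \to 0$ as $\epsilon \downarrow 0$.

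For this term-wise convergence I would telescope the ordered product of potential factors appearing in $K_{t,\left(n\right)}$, writing the difference as a sum of $n$ kernels, each carrying exactly one factor $\tilde V - \tilde V_\epsilon$ together with $n-1$ factors equal to $\tilde V$ or $\tilde V_\epsilon$ (all bounded by $M = \abs{\alpha}/\sqrt{2\pi}$), and then show that the \cite{07_HS} boundedness estimate applied to each such kernel tends to $0$. The mechanism is cleanest at first order: using $\phi_t\left(p,q\right) = \int_0^t e^{i\tau\left(p^2-q^2\right)}d\tau$ one has $K_{t,\left(1\right)}^{\left(\alpha\right)} - K_{t,\left(1\right)}^{\left(\epsilon\right)} = \int_0^t e^{i\tau P^2} C_\epsilon\, e^{-i\tau P^2}d\tau$, where $P$ is multiplication by $p$ and $C_\epsilon$ is convolution by $\left(\tilde V - \tilde V_\epsilon\right)/\sqrt{2\pi}$; since $\tilde V_\epsilon \to \tilde V$ pointwise under the uniform bound $M$, the Lebesgue dominated convergence theorem forces the controlling integral to vanish. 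The same pointwise-plus-domination input, propagated through the nested $t$- and $z$-integrations of (\ref{Kernel n}), should drive every telescoped summand to $0$.

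The hard part will be precisely this last step. The naive idea of bounding $\norm{K_{t,\left(n\right)}^{\left(\alpha\right)} - K_{t,\left(n\right)}^{\left(\epsilon\right)}}$ by $\norm{\tilde V - \tilde V_\epsilon}_\infty$ times the \cite{07_HS} constant is doomed: $\tilde V_\epsilon\left(p\right)$ decays at high momenta while $\tilde V$ has constant modulus, so $\norm{\tilde V - \tilde V_\epsilon}_\infty$ stays bounded away from $0$. Convergence therefore cannot be uniform in momentum and must be extracted pointwise. The genuine technical content is thus to verify that the \cite{07_HS} estimates can be organised so that the dependence on the potential enters through an integral of $\abs{\tilde V\left(\cdot\right)}$ against an \emph{integrable} weight---supplied by the decay of the $\phi_t$ factors and by the simplex time-integrations---since only such a form is compatible with interchanging $\epsilon \downarrow 0$ and the integrations by dominated convergence. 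Checking that this integrable envelope survives the telescoping and the $n$-fold nesting of (\ref{Kernel n}) is the crux of the argument.
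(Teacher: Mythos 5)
Your proposal is correct and follows essentially the same route as the paper's proof: both compare the two evolutions through the Hughes--Segal Dyson kernels, reduce the claim to term-wise convergence $\norm{K_{t,\,(n)}^{(\alpha)}-K_{t,\,(n)}^{(\epsilon)}}\to 0$, and extract that convergence pointwise (via $\tilde W_\epsilon(p)\to e^{-ipx_0}/\sqrt{2\pi}$ under the uniform bound $M$) against the integrable envelopes furnished by thm.~2.3 and thm.~3.4 of \cite{07_HS} --- exactly the mechanism you single out as the crux, which the paper executes by an add-and-subtract induction (your telescoping, unrolled one factor at a time) combined with Fubini, dominated convergence and the Schur test. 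If anything, your explicit Tannery step --- uniform-in-$\epsilon$ summable majorants $c_n(t)$ licensing the interchange of $\epsilon\downarrow 0$ with the sum over $n$ --- is a point where you are more careful than the paper, which passes from the bound by $\sum_{n}\norm{K_{t,\,(n)}^{(\alpha)}-K_{t,\,(n)}^{(\epsilon)}}$ to term-wise convergence without comment.
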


\begin{proof}
Set
\begin{align}
    K_{t, \, (1)}^{\left(\epsilon \right)} \left( p,q \right) & = \frac{\alpha}{\sqrt{2\pi}} \left[\frac{e^{it\left(p^{2}-q^{2}\right)} - 1}{i\left(p^{2}-q^{2}\right)}\right] \Tilde{W}_{\epsilon}\left(p-q\right) \label{Kernel Approssimante 1 un Centro} \\
    K_{t, \, (n)}^{\left(\epsilon\right)} \left( p,q \right) & = \frac{\alpha}{\sqrt{2\pi}} \int_{0}^{t}dt_{n} \int_{\mathbb{R}}dz_{n-1}  K_{t_n, \, (n-1)}^{\left(\epsilon \right)}\left(p,z_{n-1}\right) e^{it_{n}\left( z_{n-1}^{2} - q^{2} \right)} \Tilde{W}_{\epsilon}\left( z_{n-1} - q\right),\quad n \in \mathbb{N} \label{Kernel Approssimante n un Centro}
\end{align}
\noindent it results
\begin{align*}
\norm{U_\alpha \left( t \right) - e^{-itH_\epsilon}} & = \\ = \norm{e^{-itH_0} \left[\mathds{1} + \sum_{n \in \mathbb{N}} \, i^n \mathfrak{F}^{-1} K_{t, \, \left( n \right)}^{(\alpha)} \mathfrak{F} \right] - e^{-itH_0} \left[\mathds{1} + \sum_{n \in \mathbb{N}} \, i^n \mathfrak{F}^{-1} K_{t, \, \left( n \right)}^{(\epsilon)} \mathfrak{F} \right]} & \leq \sum_{n\in\mathbb{N}} \norm{K_{t, \, \left( n \right)}^{\left(\alpha\right)} - K_{t, \, \left( n \right)}^{\left(\epsilon\right)}},
\end{align*}

\noindent meaning that proving the claim amounts in showing
\begin{equation*}
    \norm{ K_{t, \, \left(n\right)}^{\left( \alpha \right)} - K_{t, \, \left(n\right)}^{\left( \epsilon \right)}} \underset {\epsilon \downarrow 0}{\longrightarrow}0, \quad \forall n \in \mathbb{N}, \; \forall t \in \mathbb{R}.
\end{equation*}

\noindent \cite{07_HS} thm. 3.4 allows for
\begin{align*}
    \norm{K_{t, \, \left( n \right)}^{\left(\alpha\right)} - K_{t, \, \left( n \right)}^{\left(\epsilon \right)}} & \leq \left\{ \left(\underset{p\in\mathbb{R}}{\sup} \int_{\mathbb{R}} \abs{K_{t, \, \left( n \right)}^{\left(\alpha\right)}\left(p,q\right) - K_{t, \, \left( n \right)}^{\left(\epsilon\right)}\left(p,q\right)} dq \right)\right.\\ & \left.\left(\underset{p\in\mathbb{R}}{\sup} \int_{\mathbb{R}} \abs{K_{t, \, \left( n \right)}^{\left(\alpha\right)\ast}\left(p,q\right) - K_{t, \, \left( n \right)}^{\left(\epsilon\right)\ast}\left(p,q\right)} dq \right) \right\}^{\frac{1}{2}} < \infty, \quad n \in \mathbb{N}
\end{align*}

\noindent hence the \emph{induction principle} is going to be used.

\paragraph{$\boxed{k=1}$}
\begin{equation*}
    \underset{\epsilon\downarrow0}{\lim} \int_{\mathbb{R}} \abs{K_{t, \, \left( 1 \right)}^{\left(\alpha\right)}\left(p,q\right) - K_{t, \, \left( 1 \right)}^{\left(\epsilon\right)}\left(p,q\right)} dq = \frac{\abs{\alpha}}{\sqrt{2\pi}} \; \underset{\epsilon\downarrow0}{\lim} \int_{\mathbb{R}} \abs{\frac{e^{it\left(p^{2}-q^{2}\right)}-1}{p^{2}-q^{2}}} \abs{\frac{e^{-i(p-q)x_0}}{\sqrt{2\pi}} - \tilde{W_{\epsilon}}\left(p-q\right)} dq
\end{equation*}

\noindent is intended to be studied. By observing that
\begin{equation*}
    \frac{\abs{\alpha}}{\sqrt{2\pi}} \int_{\mathbb{R}} \abs{\frac{e^{it\left(p^{2}-q^{2}\right)}-1}{p^{2}-q^{2}}} \abs{\frac{e^{-i(p-q)x_0}}{\sqrt{2\pi}} - \tilde{W_{\epsilon}}\left(p-q\right)} dq \leq \sqrt{\frac{2}{\pi}} \abs{\alpha}M \int_{\mathbb{R}} \abs{\frac{e^{it\left(p^{2}-q^{2}\right)}-1}{p^{2}-q^{2}}} dq < \infty
\end{equation*}

\noindent because of \cite{07_HS} thm. 2.3, remark \ref{Auxiliary for One Point Case Result} and the dominated convergence theorem allow for
\begin{equation*}
    \int_{\mathbb{R}} \abs{\frac{e^{it\left(p^{2}-q^{2}\right)}-1}{p^{2}-q^{2}}} \left\{\underset{\epsilon\downarrow0}{\lim} \abs{\frac{e^{-i(p-q)x_0}}{\sqrt{2\pi}} - \tilde{W_{\epsilon}}\left(p-q\right)}\right\} dq = 0
\end{equation*}

\noindent Since $K_{t, \; \left(1\right)} = K_{t, \; \left(1\right)}^{\ast}$,
\begin{equation*}
    \underset{\epsilon\downarrow0}{\lim} \int_{\mathbb{R}} \left| K_{t, \, \left( 1 \right)}^{\ast\left(\alpha\right)}\left(p,q\right) - K_{t, \, \left( 1 \right)}^{\ast\left(\epsilon\right)}\left(p,q\right) \right| dq = 0
\end{equation*}

\noindent holds all the same.

\paragraph{$\boxed{k=n}$}

It is assumed the statement holds for $k \leq n-1$.
{\small{
\begin{align}
    & K_{t, \, \left( n \right)}^{\left(\alpha\right)}\left(p,q\right) - K_{t, \, \left( n \right)}^{\left(\epsilon\right)}\left(p,q\right) = \\ & = \frac{\alpha}{\sqrt{2\pi}} \int_{0}^{t}dt_n \int_{\mathbb{R}}dz_{n-1} e^{it_{n}\left( z_{n-1}^{2} - q^{2} \right)} K_{t_n, \, \left( n-1 \right)}^{\left(\alpha\right)}\left(p, z_{n-1}\right) \left[ \frac{e^{-i(z_{n-1} - q)x_0}}{\sqrt{2\pi}} - \tilde{W_{\epsilon}}\left(z_{n-1} - q\right) \right] + \label{eq: Integrale Delta 01}\\& + \frac{\alpha}{\sqrt{2\pi}} \int_{0}^{t} dt_n \int_{\mathbb{R}} dz_{n-1} e^{it_{n}\left( z_{n-1}^{2} - q^{2}\right)} \left[ K_{t_n, \, \left( n-1 \right) }^{\left(\alpha\right)}\left(p, z_{n-1}\right) - K_{t_n, \, \left(n-1\right)}^{\left(\epsilon\right)}\left(p,z_{n-1}\right)\right] \tilde{W_{\epsilon}}\left(z_{n-1} - q\right).\label{eq: Integrale Delta 02}
\end{align}}}

\noindent A priori, the foregoing integrals are \emph{double} integrals; to use them as \emph{iterated}, Fubini theorem hypotheses have to be ascertained.
\begin{align*}
    \int_{0}^{t} \int_{\mathbb{R}} \abs{K_{t_n, \, \left( n-1 \right)}^{\left(\alpha\right)}\left(p,z_{n-1}\right)} \abs{\frac{e^{-i(z_{n-1} - q)x_0}}{\sqrt{2\pi}} - \tilde{W_{\epsilon}}\left(z_{n-1} - q\right)} dz_{n-1}dt_{n} & \leq \\ \leq 2M \int_{0}^{t} \int_{\mathbb{R}} \abs{K_{t_n, \, \left( n-1 \right)}^{\left(\alpha\right)}\left(p,z_{n-1}\right)} dz_{n-1} dt_{n} < \infty
\end{align*}

\noindent by being $K_{t, \, \left( n \right)}^{\left(\alpha\right)} \in \mathfrak{B}\left(L^{2}\left(\mathbb{R}\right)\right), \, \forall  n \in \mathbb{N}, \, \forall t \in \mathbb{R}$. Fubini also holds for both (\ref{eq: Integrale Delta 02}) and the adjoint case, as can be readily verified. Then, to compute
\begin{equation*}
    \underset{\epsilon\downarrow0}{\lim} \, \int_{\mathbb{R}} \abs{K_{t, \, \left( n \right)}^{\left(\alpha\right)}\left(p,q\right) - K_{t, \, \left( n \right)}^{\left(\epsilon\right)}\left(p,q\right)} dq,
\end{equation*}

\noindent the dominated convergence theorem hypotheses need to be checked out. Therefore
\begin{align*}
    \abs{\int_{0}^{t} \int_{\mathbb{R}} K_{t_n, \, \left( n-1 \right)}^{\left(\alpha\right)}\left(p, z_{n-1}\right) \left[\frac{ e^{-i(z_{n-1} - q)x_0}}{\sqrt{2\pi}} - \tilde{W_{\epsilon}}\left(z_{n-1} - q\right)\right] e^{it_{n}\left( z_{n-1}^{2} - q^{2} \right)} dz_{n-1}dt_{n}} & \leq \\ \leq \abs{\int_{\mathbb{R}} \left[\frac{ e^{-i(z_{n-1} - q)x_0}}{\sqrt{2\pi}} - \tilde{W_{\epsilon}}\left(z_{n-1} - q\right)\right] \int_{0}^{t}  K_{t_n, \, \left( n-1 \right)}^{\left(\alpha\right)}\left(p,z_{n-1}\right) e^{it_{n}\left(z_{n-1}^{2} - q^2\right)} dt_{n}dz_{n-1}} & \leq \\ \leq 2M  \int_{\mathbb{R}} \abs{\int_{0}^{t}  K_{t_n, \, \left( n-1 \right)}^{\left(\alpha\right)}\left(p,z_{n-1}\right) e^{it_{n}\left(z_{n-1}^{2} - q^2\right)} dt_{n}} dz_{n-1} & \equiv \\ \equiv \left(2M\right) \tilde{K}_{t, \, \left( n \right) }^{\left(\alpha\right)} \left(p,q\right),
\end{align*}

\noindent i.e.

\begin{align*}
    \int_{\mathbb{R}} \, \abs{\int_{0}^{t} \int_{\mathbb{R}}  K_{t_n, \, \left( n-1 \right)}^{\left(\alpha\right)}\left(p,z_{n-1}\right) \left[\frac{e^{-i(z_{n-1} - q)x_0}}{\sqrt{2\pi}} - \tilde{W_{\epsilon}}\left(z_{n-1} - q\right)\right] e^{it_{n}\left(z_{n-1}^{2} - q^2\right)} dz_{n-1}dt_{n}} dq & \leq \\ \leq 2M \int_{\mathbb{R}} \, \tilde{K}_{t, \, \left( n \right)}^{\left(\alpha\right)}\left(p,q\right) dq \leq 2M \left[ \underset{p \in \mathbb{R}}{\sup} \int_{\mathbb{R}} \;  \tilde{K}_{t, \, \left( n \right)}^{\left(\alpha\right)}\left(p,q\right) dq \right] & <  \infty,
\end{align*}

\noindent the estimate holding because of \cite{07_HS} thm. 3.4. Hence, concerning (\ref{eq: Integrale Delta 01}),
\begin{align*}
    \underset{\epsilon\downarrow0}{\lim} \, \int_{\mathbb{R}} \abs{\int_{0}^{t} \int_{\mathbb{R}}  K_{t_n, \, \left( n-1 \right)}^{\left(\alpha\right)}\left(p,z_{n-1}\right) \left[ \frac{e^{-i(z_{n-1} - q)x_0}}{\sqrt{2\pi}} - \tilde{W_{\epsilon}}\left(z_{n-1} - q\right) \right] e^{it_{n}\left( z_{n-1}^{2} - q^2\right)} dz_{n-1}dt_{n}} dq & \leq \\ \leq \int_{\mathbb{R}} \int_{0}^{t} \int_{\mathbb{R}} \abs{K_{t_n, \, \left( n-1 \right)}^{\left(\alpha\right)}\left(p,z_{n-1}\right)} \left[ \underset{\epsilon\downarrow0}{\lim} \abs{ \frac{ e^{-i( z_{n-1} - q)x_0}}{\sqrt{2\pi}} - \tilde{W_{\epsilon}}\left( z_{n-1} - q\right)}\right] dz_{n-1} dt_{n} dq & =0.
\end{align*}

\noindent On the other hand, regarding (\ref{eq: Integrale Delta 02}), the inductive hypothesis gives
\begin{align*}
    \underset{\epsilon\downarrow0}{\lim} \, \int_{\mathbb{R}} \left| \int_{0}^{t} \int_{\mathbb{R}} \left[ K_{t_n, \, \left( n-1 \right)}\left(p,z_{n-1}\right) - K_{t_n, \, \left( n-1 \right)}^{\left(\epsilon\right)}\left(p,z_{n-1}\right) \right] \tilde{W_{\epsilon}}\left(z_{n-1} - q\right) e^{it_{n}\left( z_{n-1}^{2} - q^2\right)} dz_{n-1}dt_{n} \right| dq & \leq \\ \leq  M \int_{\mathbb{R}} \int_{0}^{t} \left\{ \underset{\epsilon\downarrow0}{\lim} \int_{\mathbb{R}} \left| K_{t_n, \, \left( n-1 \right)}\left(p,z_{n-1}\right) - K_{t_n, \, \left( n-1 \right)}^{\left(\epsilon\right)}\left(p,z_{n-1}\right) \right| dq \right\} dt_{n}dz_{n-1} & = 0.
\end{align*}

\noindent By proceeding analogously for the adjoint relations, the Schur test gives
\begin{equation*}
    \norm{K_{t, \, \left( n \right)}^{\left(\alpha\right)} - K_{t, \, \left( n \right)}^{\left(\epsilon\right)}} \underset{ \epsilon \downarrow 0}{\longrightarrow}0, \; \forall t \in \mathbb{R}, \; \forall n \in \mathbb{N}.
\end{equation*}
\end{proof}

\begin{theorem}
    \justifying
    What follows holds.
    \begin{enumerate}
        \item $U_{\alpha}^{\ast}\left(t\right) \pi_S\left(a\right) U_{\alpha}\left(t\right) \in \pi_S\left[ \mathcal{R}\left( \mathbb{R}^2,\sigma \right) \right]$ for all $a \in \mathcal{R}\left( \mathbb{R}^2,\sigma \right)$;
        \item Denoted by $\left( H_\alpha, \mathcal{D}_{H_{\alpha}} \right)$ the self-adjoint operator on $L^2\left( \mathbb{R} \right)$ generating to the one parameter family of strongly continuous unitary operators $\left\{U_{\alpha}\left(t\right)\right\}_{t \in \mathbb{R}}$, $\left( H_\alpha, \mathcal{D}_{H_{\alpha}} \right)$ is affiliated to $\mathcal{R}\left(\mathbb{R}^2,\sigma \right)$;
        \item The map $\alpha_t: \, a \in \mathcal{R}(\mathbb{R}^2,\sigma) \longmapsto \alpha_t\left( a \right) \in \mathcal{R}(\mathbb{R}^2,\sigma)$, with
        \begin{equation*}
            \alpha_t \left(a\right) \doteq \pi_S^{-1} \left[ e^{itH_\alpha} \pi_S(a) e^{-itH_\alpha} \right],
        \end{equation*}
    \noindent results in an automorphism of $\mathcal{R}(\mathbb{R}^2,\sigma)$ for all $t \in \mathbb{R}$.
    \end{enumerate}
\end{theorem}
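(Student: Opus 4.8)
The plan is to derive all three claims from the preceding norm-convergence result (Proposition~\ref{One Fixed-Center Result}), from Proposition~\ref{Prop. 2.5}, and from the structural fact that $\pi_S$ is a regular, hence faithful (Proposition~\ref{Prop. 2.2}), and therefore isometric representation of the $C^\ast$-algebra $\mathcal{R}\left(\mathbb{R}^2,\sigma\right)$; consequently $\pi_S\left[\mathcal{R}\left(\mathbb{R}^2,\sigma\right)\right]$ is a norm-closed $\ast$-subalgebra of $\mathcal{B}\left(L^2\left(\mathbb{R}\right)\right)$. For item~(1), I would simply run the estimate already displayed in Remark~\ref{Clue Remark 1P}: since $\alpha W_\epsilon \in C_c^\infty\left(\mathbb{R}\right)\subset C_0\left(\mathbb{R}\right)$, Proposition~\ref{Prop. 2.5} yields $e^{itH_\epsilon}\pi_S\left(a\right)e^{-itH_\epsilon}\in\pi_S\left[\mathcal{R}\left(\mathbb{R}^2,\sigma\right)\right]$ for every $\epsilon>0$, while the triangle-inequality bound there, combined with $\norm{U_\alpha\left(t\right)^\ast-e^{itH_\epsilon}}=\norm{U_\alpha\left(t\right)-e^{-itH_\epsilon}}\to 0$, exhibits $U_\alpha\left(t\right)^\ast\pi_S\left(a\right)U_\alpha\left(t\right)$ as an operator-norm limit of elements of $\pi_S\left[\mathcal{R}\left(\mathbb{R}^2,\sigma\right)\right]$. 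Closedness of the image then gives the membership.

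For item~(2), the idea is to convert norm convergence of the propagators into norm convergence of the resolvents through a Laplace-transform representation. Writing $U_\alpha\left(t\right)=e^{-itH_\alpha}$, the spectral theorem gives, for $\lambda>0$,
\[
\left(H_\alpha-i\lambda\mathds{1}\right)^{-1}=i\int_0^\infty e^{-\lambda t}\,U_\alpha\left(t\right)\,dt,
\]
the integral understood in the strong operator topology, and identically $\left(H_\epsilon-i\lambda\mathds{1}\right)^{-1}=i\int_0^\infty e^{-\lambda t}e^{-itH_\epsilon}\,dt$, the latter lying in $\pi_S\left[\mathcal{R}\left(\mathbb{R}^2,\sigma\right)\right]$ by Proposition~\ref{Prop. 2.5}. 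Subtracting, testing on a unit vector and estimating the strong integral,
\[
\norm{\left(H_\alpha-i\lambda\mathds{1}\right)^{-1}-\left(H_\epsilon-i\lambda\mathds{1}\right)^{-1}}\le\int_0^\infty e^{-\lambda t}\,\norm{U_\alpha\left(t\right)-e^{-itH_\epsilon}}\,dt,
\]
so it suffices to send the right-hand side to zero. The per-$t$ convergence of the integrand is exactly Proposition~\ref{One Fixed-Center Result}, and the crude unitary bound $\norm{U_\alpha\left(t\right)-e^{-itH_\epsilon}}\le 2$ furnishes the dominating function $2e^{-\lambda t}\in L^1\left(0,\infty\right)$, whence dominated convergence closes the argument. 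Closedness of $\pi_S\left[\mathcal{R}\left(\mathbb{R}^2,\sigma\right)\right]$ then gives affiliation for $\lambda>0$; the case $\lambda<0$ follows by adjunction, since $\pi_S\left[\mathcal{R}\left(\mathbb{R}^2,\sigma\right)\right]$ is a $\ast$-algebra and $\left(H_\alpha+i\lambda\mathds{1}\right)^{-1}=\left[\left(H_\alpha-i\lambda\mathds{1}\right)^{-1}\right]^\ast$.

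Item~(3) is then purely algebraic. Setting $\beta_t\doteq\operatorname{Ad}_{U_\alpha\left(t\right)^\ast}=e^{itH_\alpha}\left(\cdot\right)e^{-itH_\alpha}$, conjugation by a unitary is a $\ast$-automorphism of $\mathcal{B}\left(L^2\left(\mathbb{R}\right)\right)$, and item~(1), valid for every $t\in\mathbb{R}$, shows that $\beta_t$ maps $\pi_S\left[\mathcal{R}\left(\mathbb{R}^2,\sigma\right)\right]$ into itself; since $\beta_t\circ\beta_{-t}=\beta_{-t}\circ\beta_t=\operatorname{id}$ and $\beta_{-t}$ likewise preserves the image, $\beta_t$ restricts to a $\ast$-automorphism of $\pi_S\left[\mathcal{R}\left(\mathbb{R}^2,\sigma\right)\right]$, so $\alpha_t=\pi_S^{-1}\circ\beta_t\circ\pi_S$ is an automorphism of $\mathcal{R}\left(\mathbb{R}^2,\sigma\right)$ because $\pi_S$ is an isometric $\ast$-isomorphism onto its image. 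The genuine obstacle of the whole proof is item~(2): everything hinges on having an operator-norm (Laplace) representation of the resolvent for which the \emph{same} integrable majorant $2e^{-\lambda t}$ dominates all approximants uniformly in $\epsilon$, which is precisely what upgrades the pointwise-in-$t$ convergence of Proposition~\ref{One Fixed-Center Result} to norm convergence of resolvents; verifying the Laplace formula and the attendant strong-integral and measurability technicalities is the only truly analytic point, the remaining two items being immediate once it is secured.
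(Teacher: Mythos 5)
Your proposal is correct, and for items (1) and (3) it coincides with the paper's own proof: item (1) is exactly the estimate of Remark \ref{Clue Remark 1P} combined with Proposition \ref{Prop. 2.5} and the norm-closedness of $\pi_S\left[\mathcal{R}\left(\mathbb{R}^2,\sigma\right)\right]$, and item (3) is the same conjugation argument, with surjectivity obtained from stability at $-t$ (the paper phrases injectivity as ``isometric'' and surjectivity via $e^{-itH_\alpha}\, b\, e^{itH_\alpha} \in \pi_S\left[\mathcal{R}\left(\mathbb{R}^2,\sigma\right)\right]$, which is precisely your $\beta_{-t}$ step). The one genuine divergence is item (2): the paper treats the implication ``norm dynamical convergence implies norm resolvent convergence'' as a black box, citing \cite{08_deO}, thm.~10.1.16, and then concludes by Proposition \ref{Prop. 2.5} and closedness, whereas you prove that implication from scratch via the Laplace representation $\left(H_\alpha - i\lambda\mathds{1}\right)^{-1} = i\int_0^\infty e^{-\lambda t}\, U_\alpha\left(t\right) dt$ for $\lambda>0$ (strong integral), the resulting bound $\norm{\left(H_\alpha - i\lambda\mathds{1}\right)^{-1} - \left(H_\epsilon - i\lambda\mathds{1}\right)^{-1}} \leq \int_0^\infty e^{-\lambda t} \norm{U_\alpha\left(t\right) - e^{-itH_\epsilon}}\, dt$, and dominated convergence with majorant $2e^{-\lambda t}$, handling $\lambda<0$ by adjunction. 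This is sound: your sign convention checks out, since $\lambda + iH = i\left(H - i\lambda\right)$ gives $i\int_0^\infty e^{-\lambda t} e^{-itH}\, dt = \left(H - i\lambda\mathds{1}\right)^{-1}$; the integrand $t \longmapsto \norm{U_\alpha\left(t\right) - e^{-itH_\epsilon}}$ is measurable by strong continuity of both groups; and Proposition \ref{One Fixed-Center Result} supplies exactly the pointwise-in-$t$ convergence the dominated convergence theorem needs. What your route buys is self-containedness — the external citation is replaced by a short computation valid in the same generality — while the paper's route is shorter, delegating the only analytic content of the step to a standard reference; your closing diagnosis that (2) is the sole genuinely analytic point, with (1) and (3) soft consequences of closedness and of $\pi_S$ being an isometric isomorphism onto its image (Propositions \ref{Prop. 2.1} and \ref{Prop. 2.2}), matches the structure of the paper's argument.
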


\begin{proof}
    \justifying
    \begin{enumerate}
        \item Directly from Proposition \ref{One Fixed-Center Result} and remark \ref{Clue Remark 1P}.
        \item It is a very well known fact that norm dynamical convergence\footnote{\begin{definition}
        \justifying
        Let $\mathcal{H}$ be a complex Hilbert space. Given self-adjoint operators $\left( A_n, \mathcal{D}_{A_n} \right), \, \left( A, \mathcal{D}_{A} \right)$, $A_n$ is \textbf{norm dynamically convergent} to $A$ if and only if, for all $t \in \mathbb{R}$, $\left\{e^{itA_n}\right\}_{n}$ converges to $e^{itA}$ with respect to the $\mathcal{B}\left( \mathcal{H} \right)$ norm. 
    \end{definition}
    $\hfill \square$} implies\footnote{See \cite{08_deO}, thm. 10.1.16.} norm resolvent convergence, therefore
    \begin{equation*}
        \norm{ U_{\alpha}\left( t \right) - e^{-itH_{\epsilon}} } \underset{\epsilon \downarrow 0}{\longrightarrow} 0, \, \forall t \in \mathbb{R}  \implies \norm{ \left( H_{\alpha} - i \lambda \mathds{1} \right)^{-1} - \left( H_{\epsilon} - i \lambda \mathds{1} \right)^{-1} } \underset{\epsilon \downarrow 0}{\longrightarrow} 0, \, \forall \lambda \in \mathbb{R}\setminus\{0\}.
    \end{equation*}
    \noindent Proposition \ref{Prop. 2.5} states that $\left( H_{\epsilon} - i \lambda \mathds{1} \right)^{-1} \in \pi_S\left[ \mathcal{R}\left( \mathbb{R}^2,\sigma \right) \right]$ for all $\lambda \in \mathbb{R}\setminus\{0\}$ and $\epsilon >0$. Since $\pi_S\left[ \mathcal{R}\left( \mathbb{R}^2,\sigma \right)\right]$ is closed with respect to the uniform norm topology, the affiliation of $\left( H_{\alpha}, \mathcal{D}_{H_{\alpha}} \right)$ results.
    \item Given $t \in \mathbb{R}$, the map
    \begin{equation}\label{Represented Heisenberg Dynamics}
        a \in \pi_S \left[\mathcal{R} \left( \mathbb{R}^2,\sigma \right)\right] \longmapsto e^{itH_\alpha} a e^{-itH_\alpha} \in \pi_S \left[\mathcal{R} \left( \mathbb{R}^2,\sigma \right)\right]
    \end{equation}
    is surely injective, being isometric. On the other hand, given $b \in \pi_S\left[\mathcal{R}\left(\mathbb{R}^2,\sigma \right)\right]$, because of proposition \ref{One Fixed-Center Result}, $e^{-itH_{\alpha}} \, b \, e^{itH_{\alpha}} \equiv d \in \pi_S\left[\mathcal{R}\left( \mathbb{R}^2,\sigma \right)\right]$, hence
    \begin{equation*}
        e^{itH_\alpha} \, d \, e^{-itH_{\alpha}} = b,
    \end{equation*}
    allowing to conclude that (\ref{Represented Heisenberg Dynamics}) is surjective. The same map is obviously a homomorphism; finally, being $\pi_S: \, \mathcal{R}\left(\mathbb{R}^2,\sigma \right) \longrightarrow \pi_S \left[ \mathcal{R}\left(\mathbb{R}^2,\sigma\right)\right] $ an isomorphism, the result follows.
    \end{enumerate}
\end{proof}


\subsection{Many Fixed-Centers Point Interactions}

\subsubsection{Finitely Many Fixed-Centers Point Interactions}

Focus is set on the symbolic Hamiltonian
\begin{equation}\label{N Centers PI Formal Hamiltonian}
    H = H_0 + \sum_{i=1}^{N} \, \alpha_i \, \delta\left( x - x_i \right),
\end{equation}

\noindent with $N \in \mathbb{N}$, coupling constants $\alpha_i \in \mathbb{R}\setminus \{0\}$ and fixed-centers location $x_i \in \mathbb{R}: \, x_i \neq x_j$. By setting $(\alpha) \equiv (\alpha_1,\ldots,\alpha_N) \in \mathbb{R}^N$ and $V = \sum_{i=1}^N \, \alpha_i \delta\left(\cdot - x_i\right)$,

\begin{enumerate}
    \item 
    \begin{equation*}
        \Tilde{V}(p) = \int_\mathbb{R} \left[\sum_{m=1}^N \alpha_m \delta(x-x_m)\right] e^{-ipx} \frac{dx}{\sqrt{2\pi}} = \sum_{m=1}^N \frac{\alpha_m}{\sqrt{2\pi}} e^{-ipx_m},
    \end{equation*}
    \noindent i.e. $\Tilde{V} \in L^\infty(\mathbb{R})$ and
    \item
    \begin{equation*}
        \Tilde{V}(u) = \sum_{m=1}^N \frac{\alpha_m}{\sqrt{2\pi}} e^{-ipx_m} = \sum_{m=1}^N \frac{\alpha_m}{\sqrt{2\pi}} e^{i(-p)x_m} = \overline{\sum_{m=1}^N \frac{\alpha_m}{\sqrt{2\pi}} e^{-i(-p)x_m}} = \overline{\Tilde{V}(-p)},
    \end{equation*}
    \vspace{2mm}
    \noindent for all $p \in \mathbb{R}$.
\end{enumerate}

\noindent Consequently, one legitimately relies on

\begin{align}
    K_{t, \, \left( 1 \right)}^{(\alpha)}(p,q) & = \left[\frac{e^{it(p^2-q^2)} - 1}{i(p^2-q^2)}\right] \left[ \sum_{m=1}^N \frac{\alpha_m}{\sqrt{2\pi}} e^{-i(p-q)x_m} \right] \equiv \sum_{m=1}^N K_{t, \, \left( 1 \right)}^{(\alpha), \, m}(p,q), \\
    K_{t, \, \left( n \right)}^{(\alpha)}(p,q) & = \int_0^t \int_\mathbb{R}  K_{t_n, \, \left( n-1 \right)}^{(\alpha)}(p,z_{n-1}) \left[\sum_{m=1}^N \frac{\alpha_m}{\sqrt{2\pi}}e^{-i( z_{n-1} - q)x_m}\right] e^{it_n(z_{n-1}^2 - q^2)} dz_{n-1}dt_n = \\
    & \equiv \sum_{m=1}^N K_{t, \, \left( n \right)}^{(\alpha), \, m}(p,q),
\end{align}

\noindent for all $ t,p,q \in \mathbb{R}$, to build  $\Gamma_{\left(\alpha\right)} \left( t \right)$ up, hence the unitary time evolution operator $U_{\left( \alpha \right)} \left( t \right), \; t \in \mathbb{R}$.

\begin{theorem}\label{Many Fixed-Center Result}
    \justifying
    Given $N \in \mathbb{N}$, let $\alpha_1, \ldots, \alpha_N \in \mathbb{R} \setminus \{0\}$ and non-negative smooth functions of compact support $W_1,\ldots,W_N \in C^{\infty}_{c}\left( \mathbb{R} \right)$ such that $\int_{\mathbb{R}} \, W_i = 1$ be. Considered the Schr{\"o}dinger Hamiltonians
    \begin{equation*}
        H_{\epsilon} = H_{0} + \sum_{i=1}^{N} \, \alpha_i W_{\epsilon,i} \equiv H_{0} + W_\epsilon, \quad \epsilon > 0,
    \end{equation*}
    \noindent where $W_{\epsilon,i}\left(x\right) = \epsilon^{-1} W_i\left(x/\epsilon \right), \, x \in \mathbb{R}, i \in \{1,\ldots,N\}$, for all $t \in \mathbb{R}$,
    \begin{equation*}
        \norm{U_{(\alpha)}\left(t\right) - e^{-itH_\epsilon}}_{\mathcal{B}\left(L^{2}\left(\mathbb{R}\right)\right)} \underset{\epsilon \downarrow 0}{\longrightarrow} 0.
    \end{equation*}
\end{theorem}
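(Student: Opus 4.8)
The plan is to reproduce, essentially verbatim, the argument of Proposition \ref{One Fixed-Center Result}; the passage from one to finitely many centers introduces no genuinely new analytic ingredient, only a change in the relevant constants. First I would set up the approximant kernels exactly as in the single-center case, replacing the single delta Fourier datum by its $N$-center analogue. Writing $\tilde{V}(p) = \sum_{m=1}^{N} \frac{\alpha_m}{\sqrt{2\pi}} e^{-ipx_m}$ and $\tilde{W}_{\epsilon}(p) \doteq \sum_{m=1}^{N} \alpha_m \mathcal{F}\left[W_{\epsilon,m}(\cdot - x_m)\right](p)$, one sets
\begin{equation*}
    K_{t,\,(1)}^{(\epsilon)}(p,q) = \left[\frac{e^{it(p^2-q^2)}-1}{i(p^2-q^2)}\right] \tilde{W}_{\epsilon}(p-q)
\end{equation*}
and defines $K_{t,\,(n)}^{(\epsilon)}$ by the recursion (\ref{Kernel Approssimante n un Centro}). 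Since $U_{(\alpha)}(t) = e^{-itH_0}\left[\mathds{1} + \sum_{n} i^n \mathfrak{F}^{-1} K_{t,\,(n)}^{(\alpha)} \mathfrak{F}\right]$ and $e^{-itH_\epsilon}$ admits the analogous Dyson expansion, the unitarity of $e^{-itH_0}$ and $\mathfrak{F}$ reduce the claim to showing
\begin{equation*}
    \norm{K_{t,\,(n)}^{(\alpha)} - K_{t,\,(n)}^{(\epsilon)}} \underset{\epsilon\downarrow0}{\longrightarrow} 0, \quad \forall n \in \mathbb{N}, \; \forall t \in \mathbb{R}.
\end{equation*}
I would work with $\tilde{V}$ as a single $L^\infty$ datum rather than unpacking the center-wise splitting $K_{t,\,(n)}^{(\alpha)} = \sum_m K_{t,\,(n)}^{(\alpha),\,m}$, since the latter only complicates the recursion without aiding the estimates.

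The two facts driving the single-center proof, which I would re-establish here, are the \emph{pointwise convergence} $\tilde{W}_{\epsilon}(p) \to \tilde{V}(p)$ as $\epsilon \downarrow 0$ and a \emph{uniform bound} $\abs{\tilde{W}_{\epsilon}(p)} \le M$ with $M$ independent of $\epsilon$ and $p$. Both follow at once from Remark \ref{Auxiliary for One Point Case Result} applied to each summand: one has $\mathcal{F}[W_{\epsilon,m}(\cdot-x_m)](p) \to \frac{e^{-ipx_m}}{\sqrt{2\pi}}$ with $\abs{\mathcal{F}[W_{\epsilon,m}(\cdot-x_m)](p)} \le \frac{1}{\sqrt{2\pi}}$, so that the \emph{finite} sum converges pointwise to $\tilde{V}(p)$ and $\abs{\tilde{W}_{\epsilon}(p)} \le M \doteq \sum_{m=1}^{N} \frac{\abs{\alpha_m}}{\sqrt{2\pi}}$. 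The conditions $\tilde{V} \in L^\infty(\mathbb{R})$ and $\overline{\tilde{V}(p)} = \tilde{V}(-p)$ have already been verified in the two displays preceding the statement, so the kernel estimates of \cite{07_HS} (thm. 2.3 and thm. 3.4) apply verbatim to both $K_{t,\,(n)}^{(\alpha)}$ and $K_{t,\,(n)}^{(\epsilon)}$; in particular $K_{t,\,(1)}^{(\alpha)} = K_{t,\,(1)}^{(\alpha)\ast}$, exactly as before.

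With these ingredients the induction on $n$ is identical in structure to that of Proposition \ref{One Fixed-Center Result}. For $n=1$ one dominates
\begin{equation*}
    \abs{\frac{e^{it(p^2-q^2)}-1}{p^2-q^2}}\,\abs{\tilde{V}(p-q) - \tilde{W}_{\epsilon}(p-q)} \le 2M \abs{\frac{e^{it(p^2-q^2)}-1}{p^2-q^2}},
\end{equation*}
the majorant being $q$-integrable uniformly in $p$ by \cite{07_HS} thm. 2.3; dominated convergence together with the pointwise vanishing of $\abs{\tilde{V}(p-q) - \tilde{W}_{\epsilon}(p-q)}$ then sends the Schur integral to zero, and the adjoint case is free by self-adjointness of $K_{t,\,(1)}^{(\alpha)}$. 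For the inductive step one splits $K_{t,\,(n)}^{(\alpha)} - K_{t,\,(n)}^{(\epsilon)}$ into the two contributions of (\ref{eq: Integrale Delta 01}) and (\ref{eq: Integrale Delta 02}), with $\frac{e^{-i(z_{n-1}-q)x_0}}{\sqrt{2\pi}}$ replaced throughout by $\tilde{V}(z_{n-1}-q)$: the first is handled by dominated convergence against the $\epsilon$-independent majorant $2M\,\tilde{K}_{t,\,(n)}^{(\alpha)}$, whose row-integral is finite by \cite{07_HS} thm. 3.4, while the second is controlled by the inductive hypothesis after extracting the uniform bound $\abs{\tilde{W}_{\epsilon}} \le M$ and invoking Fubini. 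Running the adjoint kernels the same way, the Schur test yields the conclusion.

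The step demanding the most attention — though it is bookkeeping rather than a true obstacle — is verifying that every domination used for Fubini and for the dominated convergence theorem remains \emph{uniform in $\epsilon$}. This is exactly where finiteness of $N$ enters: it renders $M = \sum_{m=1}^{N} \abs{\alpha_m}/\sqrt{2\pi}$ a finite $\epsilon$-independent constant, so the single-center majorants survive the substitution unchanged save for the value of $M$. By contrast, the $N = \infty$ case treated next must impose $\{\alpha_i\} \in \ell^1(\mathbb{N})$ precisely to preserve this uniformity, and that is where the genuine additional work will lie.
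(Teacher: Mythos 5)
Your proposal is correct, and it reaches the conclusion by a slightly different decomposition than the paper. The paper exploits the center-wise splitting $K_{t,\,(n)}^{(\alpha)} = \sum_{m=1}^{N} K_{t,\,(n)}^{(\alpha),\,m}$ and $K_{t,\,(n)}^{(\epsilon)} = \sum_{m=1}^{N} K_{t,\,(n)}^{(\epsilon),\,m}$, applies the triangle inequality to the Schur integral (estimate (\ref{Clue Estimate})), and then declares each of the $N$ resulting terms handled ``as in Proposition \ref{One Fixed-Center Result}''; you instead refuse the splitting and rerun the single-center induction verbatim with the aggregated datum $\tilde{V} \in L^{\infty}(\mathbb{R})$, after checking the only two properties that induction actually consumed: pointwise convergence $\tilde{W}_{\epsilon} \to \tilde{V}$ and the $\epsilon$-uniform bound $M = \sum_{m=1}^{N}\abs{\alpha_m}/\sqrt{2\pi}$. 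Your route is arguably the cleaner of the two, for a reason you implicitly identify: the paper's per-center pieces $K_{t,\,(n)}^{(\alpha),\,m}$ are not themselves single-center objects, since the recursion places the \emph{full} kernel $K_{t_n,\,(n-1)}^{(\alpha)}$ inside each summand, so the inductive hypothesis must anyway be invoked for the aggregate difference $K_{t_n,\,(n-1)}^{(\alpha)} - K_{t_n,\,(n-1)}^{(\epsilon)}$ --- the splitting buys notational reuse of the $n=1$ computation per center, not genuinely independent copies of the one-center problem. Your formulation also makes transparent why finiteness of $N$ (and, in the countable case treated next, $\{\alpha_i\} \in l^{1}(\mathbb{N})$) is exactly what keeps the dominations uniform in $\epsilon$, which matches the paper's later hypotheses. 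One further point in your favor: by writing the mollifiers as $W_{\epsilon,m}(\cdot - x_m)$ you silently supply the translations to the centers that the theorem statement omits (as written, $W_{\epsilon,i}(x) = \epsilon^{-1}W_i(x/\epsilon)$ concentrates every bump at the origin, so $H_{\epsilon}$ would converge to the wrong limit; the intended reading, consistent with Remark \ref{Auxiliary for One Point Case Result}, is the translated one you use).
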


\begin{proof}
    By using
    \begin{align}
    K_{t, \, \left( 1 \right)}^{(\epsilon)}(p,q) & = \left[\frac{e^{it(p^2-q^2)} - 1}{i(p^2-q^2)} \right] \frac{\Tilde{W}_\epsilon(p-q)}{\sqrt{2\pi}} \equiv \sum_{m=1}^N K_{t, \, \left( 1 \right)}^{(\epsilon), \, m}(p,q) \\
    K_{t, \, \left( n \right)}^{(\epsilon)}(p,q) & = \int_0^t \int_\mathbb{R} K_{t_n, \, \left( n-1 \right)}^{(\epsilon)}(p,z_{n-1}) \frac{\Tilde{W}_\epsilon ( z_{n-1} - q )}{\sqrt{2\pi}} e^{it_n(z_{n-1}^2 - q^2)} \, dz_{n-1}dt_n = \sum_{m=1}^N K_{t, \, \left( n \right)}^{(\epsilon), \, m}(p,q)
\end{align}

\noindent for all $ t,p,q \in \mathbb{R}$, to build $\exp{\left(-itH_\epsilon \right)}$ up, $t \in \mathbb{R}$, one then has

\begin{align*}
    & \norm{ U_{\left( \alpha \right)}\left(t\right) - e^{-itH_\epsilon}} \leq \underset{n\in \mathbb{N}}{\sum} \norm{K_{n,t,s}^{(\alpha)} - K_{n,t,s}^{(\epsilon)}} \leq (\text{by Schur test}) \\ & \leq \underset{n\in \mathbb{N}}{\sum} \, \left\{\left[\underset{p}{\sup} \int_\mathbb{R}\abs{K_{t, \, \left( n \right)}^{(\alpha)}(p,q) - K_{t, \, \left( n \right)}^{(\epsilon)}(p,q)}dq \right] \left[\underset{p}{\sup} \int_\mathbb{R} \abs{K_{t, \, \left( n \right)}^{(\alpha),\ast}(p,q) - K_{t, \, \left( n \right)}^{(\epsilon),\ast}(p,q)} dq\right]\right\}^{\frac{1}{2}}.
\end{align*}

\noindent Therefore, by observing that

\begin{align}
    & \int_\mathbb{R} \abs{K_{t, \, \left( n \right)}^{(\alpha)}(p,q) - K_{t, \, \left( n \right)}^{(\epsilon)}(p,q)}dq = \int_\mathbb{R} \abs{\underset{m=1}{\sum^N}\left[K_{t, \, \left( n \right)}^{(\alpha), \, m}(p,q) - K_{t, \, \left( n \right)}^{(\epsilon), \, m}(p,q)\right]} dq \leq \\
    & \leq \underset{m=1}{\sum^N} \int_\mathbb{R} \abs{K_{t, \, \left( n \right)}^{(\alpha), \, m}(p,q) - K_{t, \, \left( n \right)}^{(\epsilon), \, m}(p,q)}dq, \label{Clue Estimate}
\end{align}

\noindent the result is proved as in Proposition \ref{One Fixed-Center Result}.
\end{proof}

\begin{theorem}
    \justifying
    What follows holds.
    \begin{enumerate}
        \item $U_{\left(\alpha\right)}^{\ast}\left(t\right) \pi_S\left(a\right) U_{\left(\alpha\right)}\left(t\right) \in \pi_S\left[ \mathcal{R}\left( \mathbb{R}^2,\sigma \right) \right]$ for all $a \in \mathcal{R}\left( \mathbb{R}^2,\sigma \right)$;
        \item Denoted by $\left( H_{\left(\alpha\right)}, \mathcal{D}_{H_{\left(\alpha\right)}} \right)$ the self-adjoint operator on $L^2\left( \mathbb{R} \right)$ generating the one parameter family of strongly continuous unitary operators $\left\{U_{\left(\alpha\right)}\left(t\right)\right\}_{t \in \mathbb{R}}$, $\left( H_{\left(\alpha\right)}, \mathcal{D}_{H_{\left(\alpha\right)}} \right)$ is affiliated to $\mathcal{R}\left(\mathbb{R}^2,\sigma \right)$;
        \item The map $\alpha_t: \, a \in \mathcal{R}(\mathbb{R}^2,\sigma) \longmapsto \alpha_t\left( a \right) \in \mathcal{R}(\mathbb{R}^2,\sigma)$, with
        \begin{equation*}
            \alpha_t \left(a\right) \doteq \pi_S^{-1} \left[ e^{itH_{\left(\alpha\right)}} \pi_S(a) e^{-itH_{\left(\alpha\right)}} \right],
        \end{equation*}
    \noindent results in an automorphism of $\mathcal{R}(\mathbb{R}^2,\sigma)$ for all $t \in \mathbb{R}$.
    \end{enumerate}
\end{theorem}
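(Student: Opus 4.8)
The plan is to reproduce, step for step, the three–part argument already carried out for the single–center case, substituting Proposition~\ref{Many Fixed-Center Result} for Proposition~\ref{One Fixed-Center Result} and writing $U_{(\alpha)}$, $H_{(\alpha)}$ in place of $U_{\alpha}$, $H_{\alpha}$ throughout. The analytic heavy lifting — the Schur-test estimates and the dominated-convergence passages securing $\norm{U_{(\alpha)}(t) - e^{-itH_\epsilon}} \to 0$ — has already been discharged in Proposition~\ref{Many Fixed-Center Result}, its finite-sum majorisation~(\ref{Clue Estimate}) reducing the $N$-center bound to $N$ copies of the one-center bound. What remains is purely $C^\ast$-algebraic bookkeeping, resting on the norm-closedness of the image $\pi_S[\mathcal{R}(\mathbb{R}^2,\sigma)]$, which holds because $\pi_S$ is injective, hence isometric, so its range is complete.

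For part (1) I would invoke the triangle-inequality computation of Remark~\ref{Clue Remark 1P} verbatim: the bound
\[
\norm{U_{(\alpha)}(t)^\ast \pi_S(a) U_{(\alpha)}(t) - e^{itH_\epsilon}\pi_S(a)e^{-itH_\epsilon}} \le \left(\norm{U_{(\alpha)}(t)^\ast - e^{itH_\epsilon}} + \norm{U_{(\alpha)}(t) - e^{-itH_\epsilon}}\right)\norm{\pi_S(a)}
\]
has a right-hand side vanishing as $\epsilon \downarrow 0$ by Proposition~\ref{Many Fixed-Center Result}. Since Proposition~\ref{Prop. 2.5}, applied to the $C_0(\mathbb{R})$ potential $W_\epsilon = \sum_{i=1}^N \alpha_i W_{\epsilon,i}$, places each $e^{itH_\epsilon}\pi_S(a)e^{-itH_\epsilon}$ inside $\pi_S[\mathcal{R}(\mathbb{R}^2,\sigma)]$, the norm-closedness of that image forces the uniform limit $U_{(\alpha)}(t)^\ast \pi_S(a) U_{(\alpha)}(t)$ to lie in $\pi_S[\mathcal{R}(\mathbb{R}^2,\sigma)]$ as well.

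For part (2) I would use that norm dynamical convergence implies norm resolvent convergence (\cite{08_deO}, thm.~10.1.16), so that $\norm{(H_{(\alpha)} - i\lambda\mathds{1})^{-1} - (H_\epsilon - i\lambda\mathds{1})^{-1}} \to 0$ for every $\lambda \in \mathbb{R}\setminus\{0\}$. As Proposition~\ref{Prop. 2.5} again supplies $(H_\epsilon - i\lambda\mathds{1})^{-1} \in \pi_S[\mathcal{R}(\mathbb{R}^2,\sigma)]$, norm-closedness of the image yields $(H_{(\alpha)} - i\lambda\mathds{1})^{-1} \in \pi_S[\mathcal{R}(\mathbb{R}^2,\sigma)]$, that is, affiliation of $\left(H_{(\alpha)}, \mathcal{D}_{H_{(\alpha)}}\right)$.

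For part (3), the represented Heisenberg map $b \mapsto e^{itH_{(\alpha)}} b\, e^{-itH_{(\alpha)}}$ on $\pi_S[\mathcal{R}(\mathbb{R}^2,\sigma)]$ is isometric, hence injective, and is visibly a $\ast$-homomorphism; surjectivity follows by applying part (1) with $t \to -t$, which exhibits for any target $b$ the preimage $d = e^{-itH_{(\alpha)}} b\, e^{itH_{(\alpha)}} \in \pi_S[\mathcal{R}(\mathbb{R}^2,\sigma)]$. Conjugating this automorphism of $\pi_S[\mathcal{R}(\mathbb{R}^2,\sigma)]$ by the isomorphism $\pi_S$ then delivers the claimed automorphism $\alpha_t$ of $\mathcal{R}(\mathbb{R}^2,\sigma)$. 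I expect no genuine obstacle here: the whole difficulty was absorbed into Proposition~\ref{Many Fixed-Center Result}, and the only point deserving a word of care is that surjectivity truly requires the \emph{reversed-flow} instance of part (1) — equivalently, the $t$-independence of the stability statement — rather than part (1) as stated alone.
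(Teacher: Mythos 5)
Your proposal is correct and takes essentially the same route as the paper, whose entire proof of this statement is the remark that it ``closely mimics the one of Proposition 3.2'': you carry out exactly that mimicry, substituting Proposition \ref{Many Fixed-Center Result} for Proposition \ref{One Fixed-Center Result} and noting that $W_\epsilon = \sum_{i=1}^{N}\alpha_i W_{\epsilon,i} \in C_0(\mathbb{R})$ so that Proposition \ref{Prop. 2.5} still applies. Your closing observation that surjectivity in part (3) requires the reversed-time instance of part (1) is a sound point of care, and it matches what the paper's one-center proof implicitly uses when it produces the preimage $d = e^{-itH_\alpha}\, b\, e^{itH_\alpha}$.
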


\begin{proof}
    \justifying
    The proof closely mimics the one of Proposition 3.2.
\end{proof}

\subsubsection{Countably Many Fixed-Centers Point Interactions}
\justifying

Given $\left\{ \alpha_i \right\}_{i \in \mathbb{N}} \in l^{1}\left( \mathbb{N} \right) \setminus \left\{ 0 \right\}$ the symbolic Hamiltonian 
    \begin{equation*}
        H = -\frac{d^2}{dx^2} + \sum_{i = 1}^{\infty} \, \alpha_i \delta\left( x - x_i \right),
    \end{equation*}
\noindent with $\left\{ x_i \right\}_{i \in \mathbb{N}} \subset \mathbb{R}$ such that $x_i \neq x_j$, for all $i \neq j$, is finally considered. Set $V = \sum_{i = 1}^{\infty} \, \alpha_i \delta\left( \cdot - x_i \right)$,
    \begin{enumerate}
        \item $\tilde{V}\left( p \right) = \frac{1}{\sqrt{2\pi}} \sum_{m \in \mathbb{N}} \, \alpha_m e^{-ipx_m} = \overline{\frac{1}{\sqrt{2\pi}} \sum_{m \in \mathbb{N}} \, \alpha_m e^{ipx_m}} = \overline{\tilde{V}\left( - p\right)}, \quad \forall p \in \mathbb{R}$ and
        \item $\tilde{V} \in L^{\infty}\left( \mathbb{R} \right)$,
    \end{enumerate}
\noindent therefore $\Gamma_{ \{\alpha_i\} } \left( t \right)$ can be obtained via (\ref{Kernel 1}), (\ref{Kernel n}).

\begin{theorem}\label{Infinite number of Deltas}
    \justifying
    What follows holds.
    \begin{enumerate}
        \item For all $t \in \mathbb{R}$, the unitary time evolution operator $U_{\{\alpha_i\}}\left(t\right) \doteq e^{-itH_0}\Gamma_{\{\alpha_i\}}\left(t\right)$ belongs to $\pi_S\left[ \mathcal{R}\left( \mathbb{R}^2,\sigma \right) \right]$. Moreover, denoted by $\left( H_{\left\{ \alpha_i \right\}}, \mathcal{D}_{H_{\left\{ \alpha_i \right\}}} \right)$ the self-adjoint operator on $L^2\left( \mathbb{R} \right)$ generating the one parameter family of unitary operators $\left\{ U_{ \left\{\alpha_i\right\} }\left( t \right) \right\}_{t \in \mathbb{R}}$, it is affiliated to $\mathcal{R}\left( \mathbb{R}^2,\sigma \right)$.
        \item $U_{\{\alpha_i\}}\left(t\right)^{\ast} \pi_S\left(a\right) U_{\{\alpha_i\}}\left(t\right) \in \pi_S\left[\mathcal{R}\left(\mathbb{R}^2,\sigma\right)\right]$ for all $a \in \mathcal{R}\left( \mathbb{R}^2,\sigma \right), \; t \in \mathbb{R}$.
        \item The map
        \begin{equation*}
            \alpha_t: \, a \in \mathcal{R}\left( \mathbb{R}^2,\sigma \right) \longmapsto \alpha_t\left(a\right) = \pi_S^{-1} \left[ U_{\left\{\alpha_i\right\}}\left( t \right)^{\ast} \pi_S(a) U_{\left\{\alpha_i\right\}}\left( t \right) \right] \in \mathcal{R} \left( \mathbb{R}^2,\sigma \right)
        \end{equation*}
        is an automorphism of $\mathcal{R}\left( \mathbb{R}^2,\sigma \right)$.
    \end{enumerate}
\end{theorem}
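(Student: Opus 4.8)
The plan is to reduce the countable case to the finitely-many-centers case already settled, using the hypothesis $\{\alpha_i\}_i \in l^1(\mathbb{N})$ to dominate the tail of the interaction. For each $N \in \mathbb{N}$ I introduce the truncated symbolic Hamiltonian $H^{(N)} = H_0 + \sum_{i=1}^{N} \alpha_i \delta(\cdot - x_i)$, whose Fourier-domain kernels are built from $\tilde{V}_N(p) = \frac{1}{\sqrt{2\pi}} \sum_{m=1}^{N} \alpha_m e^{-ipx_m}$, and I write $U^{(N)}(t) = e^{-itH_0}\Gamma_{(\alpha_1,\ldots,\alpha_N)}(t)$ for the associated unitary propagator. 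By Proposition \ref{Many Fixed-Center Result} one has $\norm{U^{(N)}(t) - e^{-itH_\epsilon^{(N)}}} \to 0$ as $\epsilon \downarrow 0$ for the corresponding mollified Schr\"odinger Hamiltonian, and since $e^{-itH_\epsilon^{(N)}} \in \pi_S[\mathcal{R}(\mathbb{R}^2,\sigma)]$ by Proposition \ref{Prop. 2.5} (as recalled in Remark \ref{Clue Remark 1P}) and that set is uniformly closed, it follows that $U^{(N)}(t) \in \pi_S[\mathcal{R}(\mathbb{R}^2,\sigma)]$ for every $N$ and $t$. The whole of assertion (1) therefore hinges on proving
\begin{equation*}
    \norm{U^{(N)}(t) - U_{\{\alpha_i\}}(t)}_{\mathcal{B}\left(L^2\left(\mathbb{R}\right)\right)} \underset{N \to \infty}{\longrightarrow} 0, \quad \forall t \in \mathbb{R},
\end{equation*}
after which norm-closedness of $\pi_S[\mathcal{R}(\mathbb{R}^2,\sigma)]$ forces $U_{\{\alpha_i\}}(t) \in \pi_S[\mathcal{R}(\mathbb{R}^2,\sigma)]$.

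The decisive gain from $l^1$-summability is that $\tilde{V}_N \to \tilde{V}$ not merely pointwise but in $L^{\infty}(\mathbb{R})$, since
\begin{equation*}
    \norm{\tilde{V} - \tilde{V}_N}_{\infty} = \underset{p \in \mathbb{R}}{\sup} \, \frac{1}{\sqrt{2\pi}} \abs{\sum_{m > N} \alpha_m e^{-ipx_m}} \leq \frac{1}{\sqrt{2\pi}} \sum_{m > N} \abs{\alpha_m} \underset{N \to \infty}{\longrightarrow} 0,
\end{equation*}
while at the same time $\norm{\tilde{V}_N}_{\infty} \leq \norm{\tilde{V}}_{\infty} \leq \frac{1}{\sqrt{2\pi}} \norm{\{\alpha_i\}}_{l^1} \equiv M$ uniformly in $N$. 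I would then run the very Dyson-series estimate of Proposition \ref{One Fixed-Center Result}, with the role of $\epsilon \downarrow 0$ now played by $N \to \infty$: splitting the $n$-th order kernel difference exactly as in \eqref{eq: Integrale Delta 01}--\eqref{eq: Integrale Delta 02}, one summand carrying the factor $\tilde{V} - \tilde{V}_N$ and the other the lower-order difference, the Schur test of \cite{07_HS} thm.~3.4 together with the base estimate \cite{07_HS} thm.~2.3 yields by induction on $n$ that
\begin{equation*}
    \norm{K_{t, \, (n)}^{(\{\alpha_i\})} - K_{t, \, (n)}^{(N)}} \underset{N \to \infty}{\longrightarrow} 0, \quad \forall n \in \mathbb{N}, \; \forall t \in \mathbb{R}.
\end{equation*}

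The main obstacle is then interchanging the limit $N \to \infty$ with the infinite Dyson sum. This is exactly where the $l^1$ hypothesis earns its keep: because $M$ bounds every $\norm{\tilde{V}_N}_{\infty}$ and $\norm{\tilde{V}}_{\infty}$ simultaneously, the uniform-norm convergence bounds of \cite{07_HS} furnish a single summable majorant $\{b_n(t)\}_n$, independent of $N$, with $\norm{K_{t, \, (n)}^{(\{\alpha_i\})} - K_{t, \, (n)}^{(N)}} \leq 2\,b_n(t)$; dominated convergence for series then lets the limit pass through $\sum_{n \in \mathbb{N}}$ and delivers the displayed norm convergence (equivalently, a Lipschitz estimate $\norm{K_{t,(n)}^{(\{\alpha_i\})} - K_{t,(n)}^{(N)}} \lesssim n\,M^{n-1}\norm{\tilde{V} - \tilde{V}_N}_{\infty}$, summable in $n$, gives the same conclusion at once).

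Once this norm (hence norm dynamical) convergence $U^{(N)}(t) \to U_{\{\alpha_i\}}(t)$ is secured for every $t$, the remaining points are routine. The membership $U_{\{\alpha_i\}}(t) \in \pi_S[\mathcal{R}(\mathbb{R}^2,\sigma)]$ just obtained, together with the fact that $\pi_S[\mathcal{R}(\mathbb{R}^2,\sigma)]$ is a $\ast$-algebra, makes (2) immediate, as $U_{\{\alpha_i\}}(t)^{\ast}\pi_S(a)U_{\{\alpha_i\}}(t)$ is then a product of its elements. For the affiliation claim in (1), \cite{08_deO} thm.~10.1.16 upgrades norm dynamical convergence to norm resolvent convergence $(H^{(N)} - i\lambda\mathds{1})^{-1} \to (H_{\{\alpha_i\}} - i\lambda\mathds{1})^{-1}$ for all $\lambda \in \mathbb{R}\setminus\{0\}$, and since each truncated resolvent lies in the uniformly closed $\pi_S[\mathcal{R}(\mathbb{R}^2,\sigma)]$ by the finite-center affiliation, so does the limit. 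Finally (3) is obtained verbatim as in the corollaries to Propositions \ref{One Fixed-Center Result} and \ref{Many Fixed-Center Result}: the map $a \mapsto e^{itH_{\{\alpha_i\}}}\,a\,e^{-itH_{\{\alpha_i\}}}$ on $\pi_S[\mathcal{R}(\mathbb{R}^2,\sigma)]$ is isometric (hence injective), surjective with inverse given by conjugation by $U_{\{\alpha_i\}}(-t)$, and a $\ast$-homomorphism, so that it transports through the isomorphism $\pi_S$ to an automorphism of $\mathcal{R}(\mathbb{R}^2,\sigma)$.
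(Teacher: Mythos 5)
Your proposal is correct, and its overall skeleton coincides with the paper's: truncate to the first $N$ centers, use Proposition \ref{Many Fixed-Center Result} together with Remark \ref{Clue Remark 1P} and the norm-closedness of $\pi_S\left[\mathcal{R}\left(\mathbb{R}^2,\sigma\right)\right]$ to place each $U^{(N)}(t)$ (and each truncated resolvent) in that algebra, pass to the limit $N \to \infty$ in operator norm, and obtain affiliation from the fact that norm dynamical convergence implies norm resolvent convergence (\cite{08_deO}, thm.\ 10.1.16); your points (2) and (3) are then handled exactly as in the paper's Proposition 3.2 --- indeed your (2) is slightly slicker, since once $U_{\{\alpha_i\}}(t)$ itself belongs to the $C^{\ast}$-algebra $\pi_S\left[\mathcal{R}\left(\mathbb{R}^2,\sigma\right)\right]$, conjugation is internal for free. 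The one genuine divergence is the central convergence $\norm{U^{(N)}(t) - U_{\{\alpha_i\}}(t)} \to 0$: the paper obtains it by citing \cite{09_HK}, prop.\ 2, as a black box, whereas you prove it directly by rerunning the Dyson--Schur induction of Proposition \ref{One Fixed-Center Result} with $\epsilon \downarrow 0$ replaced by $N \to \infty$, exploiting the $l^1$ hypothesis through $\norm{\tilde{V} - \tilde{V}_N}_{\infty} \leq \left(2\pi\right)^{-1/2} \sum_{m > N} \abs{\alpha_m} \to 0$ and a uniform-in-$N$ summable majorant to interchange the limit with the Dyson sum. Your route is self-contained and makes explicit where $l^1$-summability enters (uniform $L^\infty$ control of all truncated symbols plus tail convergence, which is in fact stronger than the merely pointwise-dominated convergence available in the mollifier argument); the paper's route is shorter and inherits from \cite{09_HK} additional uniformity of the convergence in $t$ on compacts, at the cost of opacity.

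Two small blemishes, neither fatal. First, the intermediate inequality $\norm{\tilde{V}_N}_{\infty} \leq \norm{\tilde{V}}_{\infty}$ is unjustified --- truncating the series $\sum_m \alpha_m e^{-ipx_m}$ is not an $L^\infty$ contraction --- but it is also unnecessary, since $\norm{\tilde{V}_N}_{\infty} \leq \left(2\pi\right)^{-1/2} \norm{\{\alpha_i\}}_{l^1} \equiv M$ holds directly and simultaneously for $\tilde{V}$ and every $\tilde{V}_N$, which is all your argument uses. Second, the parenthetical bound $\norm{K_{t,(n)}^{(\{\alpha_i\})} - K_{t,(n)}^{(N)}} \lesssim n\, M^{n-1} \norm{\tilde{V} - \tilde{V}_N}_{\infty}$ is not summable in $n$ as written when $M \geq 1$; summability requires the implied constant to carry the $t$-dependent decay coming from the $n$-fold time-ordered integrals in \cite{07_HS}, i.e.\ it should read $n\, c_n(t)\, M^{n-1} \norm{\tilde{V} - \tilde{V}_N}_{\infty}$ with $\sum_n n\, c_n(t)\, M^{n-1} < \infty$. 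Your primary argument --- term-wise convergence dominated by the $N$-independent majorant $2 b_n(t)$ --- already accomplishes the interchange correctly, so the parenthesis should simply be read as that sharper shorthand.
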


\begin{proof}
    \justifying
    Concerning 1., the result follows from \cite{09_HK} prop. 2, Proposition \ref{Many Fixed-Center Result} and the fact that norm dynamical convergence implies norm resolvent convergence. 2. and 3. are proved as in Proposition 3.2.
\end{proof}

\begin{remark}
    \justifying
    The non-trivial ideal structure of the resolvent algebra has already proved to be fundamental for the possibility of accommodating non-trivial quantum dynamics. The same feature is also of primary importance for the following final result to hold. $\hfill \square$
\end{remark}

\begin{theorem}
    Let $\mathfrak{K}_0$ be the $\text{C}^{\ast}-$subalgebra of $\pi_S\left[ \mathcal{R}\left( \mathbb{R}^2,\sigma \right) \right]$ generated by $\mathcal{B}_{\infty} \left( L^2\left( \mathbb{R} \right) \right)$ and the identity operator. $\left( \mathcal{K}_0 \equiv \pi_S^{-1}\left(\mathfrak{K}_0\right), \, \mathbb{R}, \, \beta \right)$, where
        \begin{equation*}
            \beta: \, t \in \mathbb{R} \longmapsto \beta_{t} \in \text{Aut}\left( \mathcal{K}_0 \right)
        \end{equation*}
    \noindent and
        \begin{equation*}
            \beta_{t}: \, a \in \mathcal{K}_0 \longmapsto \beta_{t}\left( a \right) \doteq \pi_S^{-1}\left[U(t)^{\ast} \, \pi_S\left(a\right) \, U(t)\right] \in \mathcal{K}_0,
        \end{equation*}
    \noindent $U(t) \in \mathcal{B}\left( L^2\left( \mathbb{R} \right) \right)$ propagating fixed point interactions, is a $\text{C}^{\ast}-$dynamical system.
\end{theorem}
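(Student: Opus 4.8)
The plan is to verify the three defining ingredients of a $C^{\ast}$-dynamical system $(\mathcal{A},G,\alpha)$ for $\mathcal{A}=\mathcal{K}_0$, $G=\mathbb{R}$, $\alpha=\beta$: that $\mathcal{K}_0$ is a $C^{\ast}$-algebra, that $\beta$ takes values in $\mathrm{Aut}(\mathcal{K}_0)$ and is a group homomorphism, and — the genuinely analytic point — that $\beta$ is point-norm continuous, i.e. $t\mapsto\beta_t(a)$ is norm-continuous for each fixed $a$. The first thing I would record is the explicit description $\mathfrak{K}_0=\mathbb{C}\mathds{1}+\mathcal{B}_\infty\!\left(L^2(\mathbb{R})\right)$: the set $\{\lambda\mathds{1}+K : \lambda\in\mathbb{C},\, K\in\mathcal{B}_\infty(L^2(\mathbb{R}))\}$ is already norm-closed and closed under products and adjoints — the Calkin quotient separates the scalar part, forcing $\lambda_n$ to converge whenever $\lambda_n\mathds{1}+K_n$ does — and hence coincides with the $C^{\ast}$-algebra generated by the compacts together with $\mathds{1}$.

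For the algebraic part, Proposition \ref{Prop. 2.4} supplies $\mathcal{B}_\infty(L^2(\mathbb{R}))\subseteq\pi_S\!\left[\mathcal{R}(\mathbb{R}^2,\sigma)\right]$ — this is precisely where the non-simplicity of the resolvent algebra enters, since a simple $C^{\ast}$-algebra could not contain the compacts as a proper ideal — while unitality gives $\mathds{1}\in\pi_S\!\left[\mathcal{R}(\mathbb{R}^2,\sigma)\right]$; thus $\mathfrak{K}_0$ is a $C^{\ast}$-subalgebra of $\pi_S\!\left[\mathcal{R}(\mathbb{R}^2,\sigma)\right]$ and, $\pi_S$ being faithful by Proposition \ref{Prop. 2.2}, $\mathcal{K}_0=\pi_S^{-1}(\mathfrak{K}_0)$ is a $C^{\ast}$-algebra isometrically $\ast$-isomorphic to it. Invariance is then immediate: for $\pi_S(a)=\lambda\mathds{1}+K\in\mathfrak{K}_0$ one has $U(t)^{\ast}\pi_S(a)U(t)=\lambda\mathds{1}+U(t)^{\ast}KU(t)$, and $U(t)^{\ast}KU(t)$ is again compact because $\mathcal{B}_\infty$ is a two-sided ideal invariant under conjugation by any unitary; hence $\beta_t(\mathcal{K}_0)\subseteq\mathcal{K}_0$, and running the argument with $-t$ shows $\beta_t$ is a $\ast$-automorphism with inverse $\beta_{-t}$. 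The homomorphism property $\beta_{t+s}=\beta_t\circ\beta_s$ with $\beta_0=\mathrm{id}$ follows from the group law $U(t)U(s)=U(t+s)$ of the time-evolution unitaries together with the fact that $\pi_S^{-1}$ undoes the conjugations.

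The hard part will be point-norm continuity, and it is here that compactness — rather than mere strong continuity of $\{U(t)\}$ — is decisive. Since the scalar summand is $t$-independent, the task reduces to proving that $t\mapsto U(t)^{\ast}KU(t)$ is norm-continuous for every compact $K$. The key input is the standard upgrade principle: if a uniformly bounded net converges in the strong operator topology, then its product with a fixed compact operator converges in operator norm, because compacts carry the unit ball into a precompact set on which strong convergence is uniform. By Stone's theorem $\{U(t)\}_t$ is strongly continuous, and because $U(t)^{\ast}=U(-t)$ the adjoint net is strongly continuous as well. Inserting the intermediate term $U(t_0)^{\ast}KU(t)$ and estimating
\begin{equation*}
    \norm{U(t)^{\ast}KU(t)-U(t_0)^{\ast}KU(t_0)}\leq\norm{\left(U(t)^{\ast}-U(t_0)^{\ast}\right)K}+\norm{K\left(U(t)-U(t_0)\right)},
\end{equation*}
both summands tend to $0$ as $t\to t_0$ by the upgrade principle applied to $U(t)^{\ast}\to U(t_0)^{\ast}$ for the first term and, after passing to adjoints, to $U(t)\to U(t_0)$ for the second. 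Transporting this back through the isometry $\pi_S$ yields the norm-continuity of $t\mapsto\beta_t(a)$ for every $a\in\mathcal{K}_0$, completing the verification that $(\mathcal{K}_0,\mathbb{R},\beta)$ is a $C^{\ast}$-dynamical system.
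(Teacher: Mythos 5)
Your proposal is correct and takes essentially the same route as the paper: both reduce point-norm continuity of $\beta$ to the splitting $\norm{U(t)^{\ast}KU(t)-U(t_0)^{\ast}KU(t_0)}\leq\norm{\left(U(t)^{\ast}-U(t_0)^{\ast}\right)K}+\norm{K\left(U(t)-U(t_0)\right)}$ and use compactness to upgrade strong continuity of the unitaries to norm continuity of these products, the paper proving the upgrade via rank-one operators $T=\langle\psi,\cdot\rangle\varphi$ and density of finite-rank operators in $\mathcal{B}_{\infty}\left(L^2\left(\mathbb{R}\right)\right)$, while you invoke the equivalent precompactness principle (strong convergence of a bounded net is uniform on the precompact image of the unit ball under a compact operator). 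Your extra verifications --- the explicit description $\mathfrak{K}_0=\mathbb{C}\mathds{1}+\mathcal{B}_{\infty}\left(L^2\left(\mathbb{R}\right)\right)$, invariance of $\mathcal{K}_0$ under $\beta_t$, and the group-homomorphism property --- correctly fill in what the paper compresses into ``linearity, density and continuity arguments,'' with Proposition \ref{Prop. 2.4} entering at the same point in both arguments.
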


\begin{proof}
    \justifying
    First of all, it is observed that Proposition 2.4 allows for $\mathcal{B}_{\infty} \left( L^2\left( \mathbb{R} \right) \right)$ to be contained in $\pi_S\left[ \mathcal{R}\left( \mathbb{R}^2,\sigma \right) \right]$; then, for all $t_0 \in \mathbb{R}$, $\norm{ U(t)^{\ast} U(t) - U(t_0)^{\ast}U(t_0)} \underset{t\rightarrow t_0}{\longrightarrow} 0$. On the other hand, given $\psi,\varphi \in L^2\left( \mathbb{R} \right)$, let the finite rank operator $T = \langle \psi, \cdot \rangle \varphi$ be. Fixed again $t_0 \in \mathbb{R}$, it results
    \begin{equation*}
        \norm{ U\left( t \right)^{\ast} T - U\left( t_0 \right)^{\ast} T} \leq \norm{\psi} \norm{U\left(t\right)^{\ast}\varphi - U\left(t_0\right)^{\ast}\varphi} \underset{t \rightarrow t_0}{\longrightarrow} 0.
    \end{equation*}
    Analogously,
    \begin{equation*}
        \norm{ TU\left( t \right) - TU\left( t_0 \right)} \leq \norm{\varphi} \norm{U\left(t\right)^{\ast}\psi - U\left(t_0\right)^{\ast}\psi} \underset{t \rightarrow t_0}{\longrightarrow} 0,
    \end{equation*}
    \noindent therefore
    \begin{align*}
        \norm{ U\left( t \right)^{\ast} T U\left( t \right) - U\left( t_0 \right)^{\ast} T U\left( t_0 \right) } \leq \norm{ U\left( t \right)^{\ast} T - U\left( t_0 \right)^{\ast} T } + \norm{ TU\left( t \right) - TU\left( t_0 \right) } \underset{t \rightarrow t_0}{\longrightarrow} 0.
    \end{align*}
    \noindent Linearity, density and continuity arguments prove the statement.
\end{proof}


\section{Conclusions}

\justifying
This paper shows that resolvent algebras can accommodate dynamics induced by self-adjoint Hamiltonians describing a single non-relativistic spinless particle undergoing one up to countably many different fixed point interactions located on the real line. Such a result is a one of a kind result, since, apart from Buchholz investigations, it deals with the open problem of establishing which physical systems the resolvent algebra formalism can manage. Moreover, the non-simple nature of such an algebra contributes in singling out a $\text{C}^{\ast}-$subalgebra of $\pi_S\left[ \mathcal{R}\left( \mathbb{R}^2,\sigma \right) \right]$ constituting a $\text{C}^{\ast}-$dynamical system.

\section*{Acknowledgements}
\justifying
 A. M. is grateful to professors Romeo Brunetti and Detlev Buchholz for precious comments, ideas and suggestions, allowing for this work to exist as it is as well as to the colleagues Andrea Moro, Matteo Crispino, Daniele Volpe and Dario De Stefano for their help with linguistics and debugging.

\end{document}